\documentclass[envcountsame,runningheads,11pt]{llncs}
\usepackage[utf8]{inputenc}     
\usepackage[english]{babel}     
\usepackage[margin=1in,paper=letterpaper]{geometry}

\usepackage{graphicx}           
\usepackage{amsmath,amssymb}    
\usepackage{mathtools}          
\usepackage{enumitem}           
\usepackage{listings}           
\usepackage{todonotes}          
\usepackage{hyperref}           
\usepackage{mathrsfs}
\usepackage{xcolor}
\usepackage{xypic}
\usepackage{tikz}
\usepackage{float}
\usepackage{picinpar}
\usepackage{etoolbox}
\usetikzlibrary{quantikz2}

\AtEndEnvironment{proof}{\unskip\nobreak\hfill\raisebox{-0.2ex}{\ensuremath{\square}}\par}

\newcommand{\Supp}[1]{\text{Supp}_{#1}}
\newcommand{\outcome}{\mathcal{O}}
\newcommand{\G}{\mathcal{G}}

\newcommand{\Grec}{\G_{tr}}

\newcommand{\Gammarec}{{\Gamma_{tr}}}

\newcommand{\turn}[0]{\mathrm{turn}}

\definecolor{darkgreen}{rgb}{0,0.5,0}

\definecolor{darkred}{rgb}{.7,0,0}

\title{%
Quantum combinatorial games}
\subtitle{Can you beat a quantum opponent?}

\author{
	Dieks Scholten\inst{1}
	\and Bram Westerbaan\inst{2}
	\and Simona Samardjiska\inst{1}
} 
\institute{
   Radboud University, Nijmegen, Netherlands\\
    \email{dieks.scholten@ru.nl} \\
    \email{simonas@cs.ru.nl}
    \and
    \email{bram@westerbaan.name}
   }
\begin{document}
\maketitle
\begin{abstract}
A combinatorial game is a deterministic game with no hidden information played between two opponents such as 
tic-tac-toe, checkers or chess.
In this paper we extend combinatorial games to the quantum setting, by first 
 revisiting and reformulating existing theory of classical combinatorial games.

We investigate in which case a quantum opponent has an advantage over a classical one. Surprisingly, 
our instantiation of Zermelo's classical theorem in the quantum setting shows that the effects of quantum mechanics do not convey an advantage
against a classical 
player that plays a perfect classical strategy. In a more realistic scenario, when the classical player makes mistakes, 
we show how the quantum opponent can amplify the mistake to increase their chance of winning.
Our theory has application beyond the mere playing of board games and can be used as a tool in finite deterministic adversarial 
models with perfect information. 

\keywords{quantum games, combinatorial game theory, Zermelo's theorem}
\end{abstract}

\begingroup
\makeatletter
\def\@thefnmark{} \@footnotetext{
\footnotesize{%
This research has been supported by the Dutch government through the NWO Quantum Technology grant NGF.1623.23.020.
}}
\endgroup

\section{Introduction}\label{section:introduction}

A common characteristic of a game of tic-tac-toe, checkers, chess, abalone or go is that they are played between two opponents,
they are deterministic and don't involve any hidden information. 
In the literature, they are known as combinatorial games. 
In this paper we ask ourselves
what a match of a combinatorial game 
might look like 
on a quantum computer,
and whether one can have strategies that guarantee 
a certain outcome like winning,
or not losing (e.g.~tic-tac-toe).
While the rules for any of the above mentioned classical, well-known combinatorial games are set in stone,
different generalizations are possible for a quantum variant.
Indeed, several quantum variants of tic-tac-toe have already been proposed 
for different purposes~\cite{goff2006quantum,weingartner2023quantum,sagole2019quantum,nagy2012quantum,chiofalo2022games}, and also for chess~\cite{cantwell2019quantum} and go~\cite{ranchin2016quantum,sahu2022quantum,qiao2020quantum}.
None of them, however,
quite fit the following 
reasonable requirements:
\begin{enumerate}
\item 
The quantum variant should be actually practically playable on a quantum computer;
e.g.~measurement at the end.
\item 
The quantum variant should be an extension of the regular, classical game in the sense
that a strategy for the regular game (a set of instructions on how to react to each 
move of the opponent) should be playable in the quantum variant.
\item
The player in the quantum variant should not be arbitrarily restricted
to, say, make superpositions of only two moves.
\end{enumerate}
We propose a general framework for quantization of combinatorial games 
that meets all these realistic requirements and applies to a broad class of combinatorial games with ties,
including tic-tac-toe, chess and hex.
After having defined the exact rules of our quantum variant
of a combinatorial game,
we turn to the question of winning strategies.
For example, it is well known that there is no winning strategy for tic-tac-toe,
but there \emph{is} a non-losing one.
Our main result implies that,
surprisingly,
in our variant of quantum tic-tac-toe 
a `quantum player', a player that takes full advantage of the moves
permitted under this quantum variant,
can still not win against a regular `classical player' playing the classical non-losing
strategy.  A similar result holds for quantum combinatorial games in general, see Section~\ref{sec:quantum_comb_games}.

One might conclude from this that playing 
on a quantum computer (under our rules) is not really
interesting as the classical best strategies still apply.
This might well be the case for a game like tic-tac-toe,
where the classical best strategy is memorizable by a child,
but does not apply to games like chess, where, while
at least one of the two players (black or white)
must have a non-losing strategy (by Zermelo's theorem~\cite{zermelo1913}),
the details of such a strategy are not known to mortals.
It might thus not be unreasonable to think that a chess player
making proper quantum moves
might gain an edge over a non-perfect, human opponent playing just classical moves.
(We'll actually see an example of this, Lemma~\ref{amplify-mistake}).

Finally, taking into account the criteria above, we argue that the framework we propose can be used as a tool for many finite deterministic adversarial models with perfect information. One only has to find a suitable translation of their problem to a classical game. An example is the Byzantine Generals problem, which we detail in Section~\ref{sec:applications}.

\subsubsection*{Related Work.}
Several publications on quantized combinatorial games have been published in recent years. Especially \emph{quantum tic-tac-toe} has attracted considerable attention. The first definition was proposed by Goff et al.~\cite{goff2006quantum} who introduced quantum tic-tac-toe for teaching the concepts of quantum mechanics and computing. Subsequent definitions~\cite{weingartner2023quantum,sagole2019quantum,nagy2012quantum,chiofalo2022games} shared similar motivations. This justifies the use of a restricted set of quantum moves and sometimes involves intermediate measurement.
Building on top of Goff et al., Leaw \& Cheong \cite{leaw2010strategic} propose an alternative definition of quantum tic-tac-toe that generalizes classical tic-tac-toe (contrary to Goff et al). However, the implementability of this construction as a quantum circuit remains unclear.

Besides quantum tic-tac-toe, other combinatorial games have also been quantized. Cantwell's \cite{cantwell2019quantum} variant of quantum chess uses ``split'' and ``merge'' moves, which are a good way to convey intuition about quantumness but very restricting for the quantum player. Both Ranchin et al. and Sahu et al.~\cite{ranchin2016quantum,sahu2022quantum} construct a quantum-like version of go for pedagogical purposes. Qiao et al.'s \cite{qiao2020quantum} variant of quantum go expands classical go by adding specific quantized moves. Finally, a variant of quantum checkers has been proposed by Raat et al.\ \cite{raat2025quantum}, implementing the proposal for general combinatorial games by Dorbec and Mhalla~\cite{dorbec2017toward}.

Dorbec and Mhalla~\cite{dorbec2017toward} propose a general framework for quantum combinatorial games. They propose a definition with five levels of quantumness. The level with the most ``quantumness'', allows both classical moves and moves in a superposition all with equal amplitude. The moves must be applied to all positions in the superposition, illegal moves are dropped. A player loses a position once measuring a loss has probability one. This was done to ensure the combinatorial nature of the quantum variant of the games.

Kyle Burke et al.~\cite{burke2020quantum} study several properties of the various quantumness levels in the framework of Dorbec and Mhalla, most notably (for our research) showing that quantum moves \emph{have} impact on the game's outcome, in their highest quantumness level. This is contrary to our results and the consequence of Dorbec and Mhalla not adhering to our third requirement:  the players' moves in the quantum game should not be arbitrarily restricted. Specifically, in the game defined by Dorbec and Mhalla a player cannot make the move dependent on the specific position in the superposition it is applied to --- something which should be achievable using controlled gates --- possibly forcing players to make mistakes.


\subsubsection*{Our contribution}
We propose a general framework for quantizing combinatorial games. First, we show that any combinatorial game can be reformulated into an equivalent \emph{quantizable} form --- one with a fixed number of rounds and injective valid moves --- which we establish via bisimulation.

Our main result is that quantumness does not improve a player's achievable outcome against a perfect opponent. Formally, we show that for any position the set of guaranteeable outcomes is identical (up to superpositions) in the classical and quantum variants of the game. The key insight is that quantum moves must respect classical rules, so if an outcome can be guaranteed classically the quantum moves cannot escape this as their effects can be reduced to a set of classical moves. 

As a corollary, we prove a quantum Zermelo theorem~\cite{zermelo1913}. Applying the classical theorem directly to the quantum game --- which is itself a combinatorial game --- yields guarantees only over \emph{distributions} of outcomes. In particular, when neither player can force an outright win, it leaves open the possibility of losing with some probability. Our stronger version eliminates this: if neither player can force a win, both can still prevent any chance of losing, just as in the classical game.
Against an imperfect opponent, we show that quantum players \emph{can} gain advantage. To do so, we introduce a \emph{flaw-amplifying} strategy that ``cherry-picks'' the opponent's mistakes, yielding a strictly better outcome than classical strategies could achieve.

To demonstrate the applicability of our framework beyond merely playing games, we formulate the Byzantine Generals problem as a combinatorial game. This yields a clean proof that any instance of the Byzantine Generals problem solvable under classical failures is also solvable under quantum failures --- for classical well-behaving nodes. 

\subsubsection*{Organization of the paper.}
The paper is organized as follows. In Section~\ref{sec:comb_games} we define combinatorial games 
and develop the basic combinatorial theory.
We extend our theory to quantum combinatorial games in Section~\ref{sec:quantum_comb_games}, where we analyse the cases of 
a perfect and an imperfect classical players. In Section~\ref{sec:applications}, we present a non-trivial application 
of our theory. We conclude with Section~\ref{sec:conclusion} where we discuss possible generalizations and open problems.

\section{Formal approach to combinatorial games}\label{sec:comb_games}

We will be interested in two-player games like tic-tac-toe,
where no chance or secrets are involved,
and that are guaranteed to end after a set number of turns
with a win for one of the two players, or a tie.
We will denote the two players with $X$ and $O$.
More formally we define such games as follows. 
Note that this is not the standard definition
of combinatorial game\cite{WinningWays},
but a version modified for our purposes.

One change worth mentioning up front is that 
we include whose turn it is into the positions of our combinatorial games.
So while normally a (combinatorial game) position of chess would only comprise
the location of the pieces on the chess board\footnote{And, of course, whether black has castled, etc.},
our chess positions include whether black or white is to move too.
\begin{definition}\label{cg}
	A \textbf{combinatorial game} $\mathcal{G}$
	consists of a set of \textbf{positions} (also denoted by) $\mathcal{G}$,
	a set of \textbf{moves} $\Sigma$,
	a \textbf{starting position} $0$,
	a \textbf{player-to-move map} $\turn\colon \mathcal{G}\to \{X,O\}$,
	a \textbf{transition map},
	$$\Gamma\colon \mathcal{G} \times \Sigma \longrightarrow \mathcal{G}
	\cup \{ \perp \},$$
	a set of \textbf{finished positions} $\mathcal{F}\subseteq \mathcal{G}$,
	an \textbf{outcome set} $\Omega$ (usually $\{X,O,T\}$), and
	an \textbf{outcome function} $\mathcal{O}\colon \mathcal{F}\rightarrow \Omega$,
	such that
	(we write $g\xrightarrow{\sigma} g'$ when $\Gamma(g,\sigma)=g'$):
	\begin{enumerate}
		\item ($X$ starts) $\turn(0)=X$;
		\item (Taking turns) $\turn(\Gamma(g,\sigma)) \neq \turn(g)$ for all $g\in \mathcal{G}$
			and $\sigma\in \Sigma$ with $\Gamma(g,\sigma)\neq\perp$;
		\item\label{no-moves-implies-finished}(No moves implies finished)
			If in a position~$g$ there is no more (valid) move,
			then  $g\in \mathcal{F}$;
		\item\label{outcomes-do-not-change} (Outcomes do not change)
			If in a finished position $g\in \mathcal{F}$
			it is still possible to make a move,
			$g\xrightarrow{\sigma} g'$,
			then this does not change the outcome:
			$g'\in\mathcal{F}$ and $\mathcal{O}(g')=\mathcal{O}(g)$.
		\item\label{reachable-positions} (Each position is reachable) For every $g\in \mathcal{G}$
			there are $\sigma_1,\dotsc,\sigma_n\in \Sigma$  such that
			$$0\xrightarrow{\sigma_1} g_1 \xrightarrow{\sigma_2} g_2 
			\xrightarrow{\sigma_3} \dotsb 
			\xrightarrow{\sigma_n} g_n=g.$$
		\item\label{games-end} (Games end)
		There is a natural number~$N$
			such that for 
			any sequence $g_0\xrightarrow{\sigma_1}g_1 \xrightarrow{\sigma_2} \dotsb \xrightarrow{\sigma_N}g_N$
			of~$N$ moves we have  $g_N\in\mathcal{F}$.
	\end{enumerate}
We call a combinatorial game \textbf{finite}
if the following three sets are finite:
the set of positions, $\mathcal{G}$, the set of moves, $\Sigma$, and the outcome set, $\Omega$.
\end{definition}
Some remarks about the definition above are in order.

Only for the \emph{finite} combinatorial games
will we define a quantum variant,
which will itself be a (non-finite) combinatorial game.
The restriction to a finite set of positions and moves
is necessary,
because (quantum) computers only have finite memory.
The resulting quantum variant, however,
will \emph{not} have only a finite set of positions,
so finiteness is not part of the definition of combinatorial game above.
Similarly, since the execution time on a quantum computer is limited,
we only consider games that end after finitely many moves.
(These are known as `short' games in the literature~\cite{WinningWays}.)
Interestingly,
the quantum variant will be short as well,
so we included this restriction in the definition of our `combinatorial game',
via condition~\ref{games-end}.
While the outcome set for tic-tac-toe 
(and many other combinatorial games)
is simply $\{X,O,T\}$, representing a win for~$X$, a win for~$O$, and a tie, respectively,
you will see below
that the outcome set for a game of quantum tic-tac-toe
will be the set~$\mathcal{D}(\{X,O,T\})$ of probability distributions over~$\{X,O,T\}$.

Condition~\ref{outcomes-do-not-change} might seem odd:
why not simply prohibit moves in a finished position?
The reason is that a quantum position will be a superposition of classical positions,
some of which may already be finished while others are not.
If we want a move on a quantum position to simply be a classical move
on each component of the superposition,
we must permit moves on finished positions.
In many games, such as tic-tac-toe, this is not part of the original rules,
so for quantization we consider a slightly modified version
with a `skip' move on any finished position
(see Definition~\ref{def: extended game}),
which raises the question whether anything essential has changed.
We address this by showing the modified game is \emph{bisimilar}
to the original (Definition~\ref{def:bisimulation}).

A second hurdle one encounters when defining a quantum variant
of a combinatorial game
is making sure that a classical strategy
can be followed on a quantum computer as well.
It may happen that two different positions
lead to the same position in one step,
for example, for tic-tac-toe it is standard
for~$X$ to move to ii)
both from i) and iii) in the figure below.
This presents a problem on
\newsavebox{\injDiagram}%
\sbox{\injDiagram}{\includegraphics[width=0.4\textwidth]{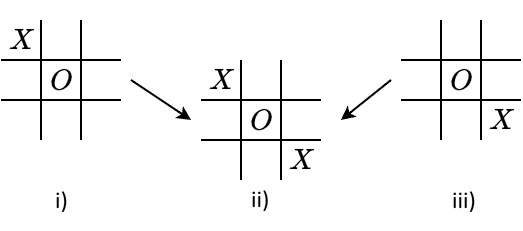}}%
\begin{window}[0,r,{\usebox{\injDiagram}},{}]
\noindent a quantum computer: any operation
must be a unitary, which is injective,
which the move iii) to ii) is not.
The solution is to let the unitary act not on the current position alone,
but on the full play history (Definition~\ref{def:transcribed games}),
making $\Gamma\colon \mathcal{G}\times \Sigma\to \mathcal{G}\cup\{\perp\}$ an injective partial map.
Again, bisimulation shows this modified game is equivalent, at least classically.
\end{window}

\subsection{Basic combinatorial game theory}

\begin{definition}
Given a position~$g$ in a combinatorial game~$\mathcal{G}$,
let $N(g)$ be the maximum number of moves
that can be made from~$g$ before ending up in a finished position.
\end{definition}
Note that~$N(g)<\infty$ by condition~\ref{games-end}.
The number~$N(g)$
is very convenient for induction
over the set of positions in a game, because:
\begin{enumerate}
\item 
	$N(g)=0$ if and only if~$g$ is finished; and
\item
	if $g\xrightarrow{\sigma} g'$ for some~$\sigma$, and~$g$ is not finished, then $N(g') < N(g)$.
\end{enumerate}
(The number  $N(g)$ plays a role similar to the \emph{birthday} of a position in traditional
combinatorial game theory.)

With the help of~$N(g)$
we can introduce
the two most important characters
of this paper, the sets~$\Sigma_{\mathcal{T}}$ and $\Pi_{\mathcal{T}}$
(where $\mathcal{T}\subseteq \Omega$ is some `target outcome'),
with the following interpretation:
\begin{itemize}
	\item $g\in \Sigma_{\mathcal{T}}$
		iff the current player can achieve an outcome
		from~$\mathcal{T}$, whatever the opponent does.
	\item $g\in \Pi_{\mathcal{T}}$
		iff no matter what the current player does,
		the opponent can achieve an outcome from~$\mathcal{T}$.
\end{itemize}
\begin{definition}\label{def:Sigma_T Pi_T}
	For a combinatorial game~$\mathcal{G}$
	and `target outcome'~$\mathcal{T} \subseteq \Omega$ we define
	$\Sigma_{\mathcal{T}}, \Pi_{\mathcal{T}}\subseteq \mathcal{G}$ by:
	\begin{itemize}
	\item $g\in \Sigma_{\mathcal{T}}$
		iff either~$g$ is finished and  $\mathcal{O}(g)\in\mathcal{T}$,
			or $g$ is not finished and
			there is some move $g\xrightarrow{\sigma} g'$
			with $g'\in \Pi_{\mathcal{T}}$.
	\item $g\in \Pi_{\mathcal{T}}$
		iff either~$g$ is finished and  $\mathcal{O}(g)\in\mathcal{T}$, 
			or $g$ is not finished and  $g'\in \Sigma_{\mathcal{T}}$ for 
			any move  $g\xrightarrow{\sigma} g'$.
	\end{itemize}
\end{definition}
\begin{remark}
It is not a priori clear that this definition
is well-founded;
if we completely unfold the recursive definition
of $g\in \Sigma_{\mathcal{T}}$,
why might we not  end up with an infinite recursion tree
leaving us unable to decide whether $g\in \Sigma_{\mathcal{T}}$ or not?
This problem, however, does not occur,
since games do not loop nor go on forever. The recursion tree will have depth at most~$N(g)$,
and so we end up always being able to decide
whether $g\in \Sigma_{\mathcal{T}}$
and whether  $g\in \Pi_{\mathcal{T}}$.
\end{remark}

Note that the interpretation
for 
$g \notin \Sigma_{\mathcal{T}}$
is that the current player cannot make sure that the game ends in~$\mathcal{T}$
if their opponent makes the right moves.
Logically, this should
be equivalent
to their opponent having a way to make sure the game ends in an outcome from
$\Omega\backslash \mathcal{T}$,
whatever the current player does, that is,  $g\in \Pi_{\Omega\backslash \mathcal{T}}$.
This is indeed the case.
\begin{theorem}\label{thm:fund-cgt}
Given a combinatorial game~$\mathcal{G}$, and~$\mathcal{T} \subseteq \Omega$,
	we have $\mathcal{G}\backslash \Sigma_{\mathcal{T}} 
	= \Pi_{\Omega\backslash\mathcal{T}}$.
\end{theorem}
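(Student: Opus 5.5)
We prove the statement by strong induction on $N(g)$. For each $n$, the claim $P(n)$ is: for every position $g$ with $N(g)=n$ and every $\mathcal{T}\subseteq\Omega$, we have $g\notin\Sigma_{\mathcal{T}}$ iff $g\in\Pi_{\Omega\backslash\mathcal{T}}$. Quantifying over all $\mathcal{T}$ at once costs nothing. It is convenient because the step will need the \emph{dual} statement $g'\notin\Pi_{\mathcal{T}}$ iff $g'\in\Sigma_{\Omega\backslash\mathcal{T}}$ for successors $g'$. So the induction hypothesis should carry both equivalences: $\mathcal{G}\backslash\Sigma_{\mathcal{T}}=\Pi_{\Omega\backslash\mathcal{T}}$ and $\mathcal{G}\backslash\Pi_{\mathcal{T}}=\Sigma_{\Omega\backslash\mathcal{T}}$, restricted to positions with $N(g)<n$.

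\emph{Base case, finished positions ($N(g)=0$).} Here $g\in\Sigma_{\mathcal{T}}$ iff $\mathcal{O}(g)\in\mathcal{T}$, and $g\in\Pi_{\Omega\backslash\mathcal{T}}$ iff $\mathcal{O}(g)\in\Omega\backslash\mathcal{T}$. These two conditions are complementary. The dual statement $\mathcal{G}\backslash\Pi_{\mathcal{T}}=\Sigma_{\Omega\backslash\mathcal{T}}$ at finished positions follows in the same way.

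\emph{Inductive step, $g$ not finished.} Every move $g\xrightarrow{\sigma}g'$ has $N(g')<N(g)$, so the hypothesis applies to every successor. Unfolding the definitions:
\[
g\notin\Sigma_{\mathcal{T}}
\iff \text{for all moves } g\xrightarrow{\sigma}g',\ g'\notin\Pi_{\mathcal{T}}
\iff \text{for all moves } g\xrightarrow{\sigma}g',\ g'\in\Sigma_{\Omega\backslash\mathcal{T}}
\iff g\in\Pi_{\Omega\backslash\mathcal{T}}.
\]
The middle equivalence is the inductive hypothesis applied to each $g'$. The last equivalence is the definition of $\Pi$ at an unfinished position. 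Symmetrically,
\[
g\notin\Pi_{\mathcal{T}}
\iff \text{some move } g\xrightarrow{\sigma}g' \text{ has } g'\notin\Sigma_{\mathcal{T}}
\iff \text{some move } g\xrightarrow{\sigma}g' \text{ has } g'\in\Pi_{\Omega\backslash\mathcal{T}}
\iff g\in\Sigma_{\Omega\backslash\mathcal{T}}.
\]
This proves both equivalences at level $n$ and completes the induction.

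The argument is essentially routine. The only point needing care is to set up the induction with both dual statements simultaneously, since $\Sigma$ and $\Pi$ are defined by mutual recursion; proving only the statement about $\Sigma$ would leave the middle step unjustified. One must also confirm that an unfinished position has at least one move, so that the case split is clean; this holds by condition~\ref{no-moves-implies-finished}. Condition~\ref{outcomes-do-not-change} plays no role, because the definitions never look at moves out of finished positions.
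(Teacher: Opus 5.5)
Your proof is correct and is essentially the paper's argument: strong induction on $N(g)$ with the claim quantified over all $\mathcal{T}$, unfolding the definitions of $\Sigma_{\mathcal{T}}$ and $\Pi_{\mathcal{T}}$ at an unfinished position. The one difference is that you carry the dual equivalence $\mathcal{G}\backslash\Pi_{\mathcal{T}}=\Sigma_{\Omega\backslash\mathcal{T}}$ as a separate hypothesis, whereas the paper obtains it for free by instantiating the $\Sigma$-statement at $\Omega\backslash\mathcal{T}$ and contraposing. For that reason your remark that the $\Sigma$-statement alone would leave the middle step unjustified is inaccurate once all $\mathcal{T}$ are quantified, though carrying both statements does no harm.
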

\begin{proof}
Let us write $\overline{\mathcal{T}} := 
\Omega\backslash \mathcal{T}$ for brevity.
	Let~$g\in \mathcal{G}$.
We'll prove the following equation holds for all~$\mathcal{T}$ by
strong induction over~$N(g)$.
	\vspace{-1.2em}

\begin{equation}
\label{pisigma}
g\notin\Sigma_{\mathcal{T}} \ \iff \ \ g\in\Pi_{\Omega\backslash \mathcal{T}}.
\end{equation}
	When~$N(g)=0$, then~$g\in \mathcal{F}$,
	and so $g\notin \Sigma_{\mathcal{T}}$ iff not $\mathcal{O}(g) \in \mathcal{T}$
	iff $\mathcal{O}(g) \in \overline{\mathcal{T}}$
	iff $g\in \Pi_{\overline{\mathcal{T}}}$, for any~$\mathcal{T}$.

Suppose now that~$N(g)>0$, and that~\eqref{pisigma} holds for all~$g'$ with~$N(g')<N(g)$
	and all $\mathcal{T}$;
	we'll show that~\eqref{pisigma} holds for~$g$ and all~$\mathcal{T}$ as well.
Note that~$g$ is not finished,
	since $N(g)>0$.

	Given~$\mathcal{T}$, the following are equivalent to $g\notin \Sigma_{\mathcal{T}}$:
\begin{itemize}
\item there are no $g'$ and~$\sigma$ with  $g\xrightarrow{\sigma}g'$
	and  $g'\in \Pi_{\mathcal{T}}$;
\item for any $g'$ and $\sigma$ with $g\xrightarrow{\sigma}g'$ we have  $g'\notin \Pi_{\mathcal{T}}$; and 
	(using the induction hypothesis here, and the fact that $N(g')<N(g)$)
\item for any $g'$ and $\sigma$ with $g\xrightarrow{\sigma}g'$ we have  $g'\in \Sigma_{\overline{\mathcal{T}}}$.
\end{itemize}
Since the latter item is equivalent to $g\in \Pi_{\overline{\mathcal{T}}}$, we have proven~\eqref{pisigma} by induction.
\end{proof}

\begin{corollary}[Zermelo]\label{cor: zermelo's result}
Given a combinatorial game~$\G$ and a partition
of the outcome set
	$\Omega = \Omega_X \sqcup \Omega_O \sqcup \Omega_T$,
	where an outcome in $\Omega_X$ is interpreted as a win for~$X$,
	$\Omega_O$ as a win for~$O$,
	and $\Omega_T$ a tie.
	Then for a position~$g$ of~$\G$ where the current player is~$X$
	exactly one of the following three mutually exclusive cases holds:
	\begin{enumerate}
		\item\label{zermelo-X} $g \in \Sigma_{\Omega_X}$ --- the current player~$X$ can force a win;
		\item\label{zermelo-O} $g \in \Pi_{\Omega_O}$ --- their opponent~$O$ can force a win;
		\item\label{zermelo-T} $g \in \Sigma_{\Omega_T \sqcup \Omega_{X}}$ and $g \in \Pi_{\Omega_T \sqcup \Omega_{O}}$ 
			--- both players can prevent a loss.
	\end{enumerate}
(The same result holds
for when~$O$ is the current player,
	mutatis mutandis.)
\end{corollary}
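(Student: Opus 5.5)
The argument is a case analysis on whether $g\in\Sigma_{\Omega_X}$, with Theorem~\ref{thm:fund-cgt} as the only real tool. We apply it twice, with target sets $\mathcal{T}=\Omega_X$ and $\mathcal{T}=\Omega_X\sqcup\Omega_T$.

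First we show the three cases are mutually exclusive. By Theorem~\ref{thm:fund-cgt} with $\mathcal{T}=\Omega_X$, we have $g\in\Sigma_{\Omega_X}$ iff $g\notin\Pi_{\Omega_O\sqcup\Omega_T}$. Cases~\ref{zermelo-X} and~\ref{zermelo-T} therefore cannot both hold. With $\mathcal{T}=\Omega_X\sqcup\Omega_T$, the complement is $\Omega_O$, so $g\in\Sigma_{\Omega_T\sqcup\Omega_X}$ iff $g\notin\Pi_{\Omega_O}$. Cases~\ref{zermelo-O} and~\ref{zermelo-T} therefore cannot both hold. For cases~\ref{zermelo-X} and~\ref{zermelo-O}, we first show that $\Sigma_{\mathcal{T}}$ and $\Pi_{\mathcal{T}}$ are monotone in $\mathcal{T}$. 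This follows by an easy strong induction on $N(g)$ straight from Definition~\ref{def:Sigma_T Pi_T}. Given monotonicity, $g\in\Sigma_{\Omega_X}\subseteq\Sigma_{\Omega_X\sqcup\Omega_T}$, which by the second equivalence gives $g\notin\Pi_{\Omega_O}$.

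Next we show that at least one case holds. Suppose neither case~\ref{zermelo-X} nor case~\ref{zermelo-O} holds. Since $g\notin\Pi_{\Omega_O}$, the second equivalence gives $g\in\Sigma_{\Omega_T\sqcup\Omega_X}$. Since $g\notin\Sigma_{\Omega_X}$, the first equivalence gives $g\in\Pi_{\Omega_O\sqcup\Omega_T}$. These two facts are exactly case~\ref{zermelo-T}.

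The only step not handed to us directly by Theorem~\ref{thm:fund-cgt} is the monotonicity lemma. If $\mathcal{T}\subseteq\mathcal{T}'$, then $\Sigma_{\mathcal{T}}\subseteq\Sigma_{\mathcal{T}'}$ and $\Pi_{\mathcal{T}}\subseteq\Pi_{\mathcal{T}'}$, and the two inclusions are proved simultaneously by strong induction on $N(g)$. In the base case $g$ is finished and $\mathcal{O}(g)\in\mathcal{T}\subseteq\mathcal{T}'$. In the inductive step, every successor $g'$ of a non-finished $g$ has $N(g')<N(g)$, so the witnessing move for $\Sigma$ transfers, and the universal condition for $\Pi$ transfers likewise. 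The case where $O$ is to move is identical after swapping the roles of $\Omega_X$ and $\Omega_O$.
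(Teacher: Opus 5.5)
Your proposal is correct and follows the paper's proof essentially step for step. The two applications of Theorem~\ref{thm:fund-cgt}, with $\mathcal{T}=\Omega_X$ and $\mathcal{T}=\Omega_X\sqcup\Omega_T$, give the two dichotomies, and monotonicity of $\mathcal{T}\mapsto\Sigma_{\mathcal{T}}$ rules out cases~1 and~2 holding together; the paper proves this monotonicity by the same induction in Lemma~\ref{lem:A subset B then sigma_A subset sigma_B and for Pi}. The only difference is that you check the exclusivity of cases~1/3 and~2/3 explicitly, which the paper leaves implicit in its ``either \dots\ or otherwise'' phrasing.
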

\begin{proof}[See Appendix~\ref{omitted-proofs}]
\end{proof}
\begin{remark}
Note that if there are no ties, that is  $\Omega_T=\varnothing$,
only cases 1 and 2 can hold.
\end{remark}
\begin{remark}
	A special case of this corollary is where $g = 0$. We can therefore classify games on the case in which their starting position falls into. For example, tic-tac-toe falls in case~3:  neither the starting player~$X$ nor its opponent~$O$ can always guarantee a win, which means that they can both prevent a loss. 
\end{remark}

Before we can move to quantization of a combinatorial game we require two properties of a combinatorial game.
First, we need some notion of injectivity, as described above. We now define \textit{injective} combinatorial games and later in Lemma~\ref{lem: all classical moves have an equivalent quantum move} we will show that this requirement is sufficient
to ensure any classical move has some equivalent unitary quantum move.
\begin{definition}\label{def:injective combinatorial games}
We call a combinatorial game $\G$ \textbf{injective}
	when the transition map $\Gamma\colon \G\times \Sigma\rightarrow \G \cup\{\perp\}$
is injective on valid moves, that is,
	$\Gamma(g,\sigma)=\Gamma(h,\pi)\neq \perp$
	implies $g=h$ and~$\sigma=\pi$.
\end{definition}

To define a quantum variant,
we need not only
injectivity of the game,
and the fact that the game ends after a set number of moves (condition~\ref{games-end} of~Def.~\ref{cg}),
but also that that number of moves can always be played:
\begin{definition}\label{def:fixed-length combinatorial games}
We call a combinatorial game $\G$ \textbf{fixed-length} when 
there is a natural number~$N$
such that not only 
for any sequence
$g_0 \xrightarrow{\sigma_1} \cdots \xrightarrow{\sigma_N} g_N$ of $N$ moves we have $g_N \in \mathcal{F}$,
	but also 
	that for any sequence 
	$0 =: g_0 \xrightarrow{\sigma_1} \cdots \xrightarrow{\sigma_{M}} g_{M}$ of $M<N$ moves 
	there is still some move $\smash{g_{M}\xrightarrow{\sigma_{M+1}} g_{M+1}}$
	that can be played
	(even when~$g_M$ is already finished).
\end{definition}

Any combinatorial game
can be modified to become injective and of fixed-length,
without fundamentally changing the gameplay.
This follows from the fact that
the modified game is bisimilar (see Definition~\ref{def:bisimulation}) to the original one
(see Corollary~\ref{modified-game-bisimilar} and Proposition~\ref{prop:guaranteeable sets are equivalent for bisimilar games}).
Details of this construction can be found in Appendix~\ref{injective-and-fixed-length},
but we will introduce the notion of bisimilarity here.

The idea of a bisimulation
between two games $\mathcal{G}_1$ and $\mathcal{G}_2$
is that it allows one to translate a strategy for one game
into another. Every move in~$\mathcal{G}_1$ (either from the starting player or
their opponent)
can be translated into an equivalent move in~$\mathcal{G}_2$,
and vice versa.
\begin{definition}\label{def:bisimulation}
A \textbf{bisimulation}
$\sim$ between combinatorial games $\mathcal{G}_1$ and $\mathcal{G}_2$
with the same outcome set~$\Omega$
	is a relation $\sim \subseteq \mathcal{G}_1\times\mathcal{G}_2$
	such that $0_{\mathcal{G}_1} \sim 0_{\mathcal{G}_2}$, and:
\begin{enumerate}
\item
Given~$g_1 \sim g_2$ we have $g_1\in \mathcal{F}$ iff~$g_2\in \mathcal{F}$,
		and then $\mathcal{O}(g_1)=\mathcal{O}(g_2)$;
\item
Given $g_1 \sim g_2$ with $g_1 \notin \mathcal{F}_1$ (iff $g_2 \notin \mathcal{F}$),
we have:
	\begin{enumerate}
	\item If $g_1\xrightarrow{\sigma_1} g_1'$,
		then $g_2\xrightarrow{\sigma_2} g_2'$
		for some~$\sigma_2$ and~$g_2'$ with $g_1'\sim g_2'$.
	\item If $g_2\xrightarrow{\sigma_2} g_2'$,
		then $g_1\xrightarrow{\sigma_1} g_1'$
		for some~$\sigma_1$ and~$g_1'$ with $g_1'\sim g_2'$.
	\end{enumerate}
\end{enumerate}
When such a bisimulation exists,
	we call the games $\G_1$ and~$\G_2$
	\textbf{bisimilar},
	and we write $\G_1 \sim \G_2$.
\end{definition}

Bisimilar games have the same gameplay (for proof see Appendix~\ref{omitted-proofs}):
\begin{proposition}\label{prop:guaranteeable sets are equivalent for bisimilar games}
Let~$\sim$ be a bisimulation between games $\mathcal{G}_1$ and~$\mathcal{G}_2$
with the same outcome set~$\Omega$,
and let~$\mathcal{T}\subseteq \Omega$
	be some target outcome.
Then:
	\begin{equation}
		\label{bisimilarity-implies-equistrategy}
		g_1 \sim g_2 \quad \implies \begin{cases}
			\quad g_1 \in \Sigma_{\mathcal{T}} & \iff \quad
	g_2 \in \Sigma_{\mathcal{T}}\\ 
			\quad g_1 \in \Pi_{\mathcal{T}} & \iff \quad
	g_2 \in \Pi_{\mathcal{T}}
		\end{cases}.
	\end{equation}
\end{proposition}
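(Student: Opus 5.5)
We prove the two equivalences simultaneously by strong induction, but since $g_1$ and $g_2$ live in different games we cannot induct on $N(g_1)$ alone. We induct on $N(g_1)+N(g_2)$; alternatively, on $N(g_1)$ with the claim quantified over all $g_2$ related to $g_1$. The statement to prove, for fixed $\mathcal{T}$, is: for all $g_1\sim g_2$,
\[
g_1\in\Sigma_{\mathcal{T}}\iff g_2\in\Sigma_{\mathcal{T}},
\qquad
g_1\in\Pi_{\mathcal{T}}\iff g_2\in\Pi_{\mathcal{T}}.
\]

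\textbf{Base case and the finished case.} If either $g_1$ or $g_2$ is finished, condition~1 of Definition~\ref{def:bisimulation} says both are finished and $\mathcal{O}(g_1)=\mathcal{O}(g_2)$. Both membership conditions in Definition~\ref{def:Sigma_T Pi_T} then reduce to $\mathcal{O}(g_i)\in\mathcal{T}$, which is the same condition for $i=1,2$. This also covers $N(g_1)+N(g_2)=0$.

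\textbf{Inductive step.} Now suppose both $g_1$ and $g_2$ are unfinished.

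For $\Sigma_{\mathcal{T}}$, assume $g_1\in\Sigma_{\mathcal{T}}$. Then there is a move $g_1\xrightarrow{\sigma_1}g_1'$ with $g_1'\in\Pi_{\mathcal{T}}$. Clause 2(a) gives a move $g_2\xrightarrow{\sigma_2}g_2'$ with $g_1'\sim g_2'$. Since $g_1$ and $g_2$ are unfinished, $N(g_1')<N(g_1)$ and $N(g_2')<N(g_2)$, so the induction hypothesis applies to the pair $(g_1',g_2')$ and yields $g_2'\in\Pi_{\mathcal{T}}$. Hence $g_2\in\Sigma_{\mathcal{T}}$. The converse is symmetric, using clause 2(b).

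For $\Pi_{\mathcal{T}}$, assume $g_1\in\Pi_{\mathcal{T}}$ and take any move $g_2\xrightarrow{\sigma_2}g_2'$. Clause 2(b) gives $g_1\xrightarrow{\sigma_1}g_1'$ with $g_1'\sim g_2'$. We have $g_1'\in\Sigma_{\mathcal{T}}$ by assumption, so the induction hypothesis gives $g_2'\in\Sigma_{\mathcal{T}}$. Thus $g_2\in\Pi_{\mathcal{T}}$, and again the converse is symmetric. Alternatively, the $\Pi$ case follows from the $\Sigma$ case via Theorem~\ref{thm:fund-cgt}, since $\Pi_{\mathcal{T}}=\mathcal{G}\setminus\Sigma_{\Omega\setminus\mathcal{T}}$ in each game.

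\textbf{Main obstacle.} The only subtle point is choosing the induction measure, because the induction must decrease in both games at once. The sum $N(g_1)+N(g_2)$ works because the finished case is handled separately: whenever we apply the induction hypothesis, both positions are unfinished and therefore both $N$-values strictly drop.

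A second point is that finished positions may still admit moves, by condition~\ref{outcomes-do-not-change}. This causes no trouble, because Definition~\ref{def:Sigma_T Pi_T} and Definition~\ref{def:bisimulation} both restrict their move clauses to unfinished positions.
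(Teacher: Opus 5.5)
Your proof is correct and is essentially the paper's. The paper inducts on $N(g_1)$ with $g_2$ universally quantified, which is your stated alternative; this also shows the measure need not drop in both games at once, contrary to your closing remark. Like you, it settles the finished case by condition~1 of the bisimulation and uses clauses 2(a) and 2(b) for the two directions of the $\Sigma_{\mathcal{T}}$ equivalence.

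The one difference is that you prove the $\Pi_{\mathcal{T}}$ equivalence directly inside the same induction, whereas the paper derives it from Theorem~\ref{thm:fund-cgt}. Your way is slightly tidier: the $\Sigma_{\mathcal{T}}$ step already needs the $\Pi_{\mathcal{T}}$ equivalence at the successor pair, so the paper's route implicitly requires inducting over all $\mathcal{T}$ at once.
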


\section{Quantum combinatorial games}\label{sec:quantum_comb_games}
Below we'll define a quantum variant $\mathcal{G}^Q$ of a combinatorial game~$\mathcal{G}$.
In this definition we've very explicitly distinguished
the components of the quantum and classical
games using the~$Q$ and~$C$ superscripts
(e.g.~$\Sigma^Q$ and $\Sigma^C$).
In the rest of the text,
we'll often drop the~$C$ superscript (just writing $\Sigma$ for~$\Sigma^C$).
\begin{definition}\label{qv}
	Let~$\mathcal{G}$ be a fixed-length and injective combinatorial game
	such that the positions can be encoded using length~$L$ bitstrings, 
	so~$\G\subseteq \{0,1\}^L$.
	We call such combinatorial games \textbf{quantizable}.
	The \textbf{quantum variant} $\mathcal{G}^Q$ of~$\mathcal{G}=:\mathcal{G}^C$ 
	is the combinatorial game with:
	\begin{enumerate}
			\item Set $\Sigma^Q$ of possible \textbf{moves}, defined as the set of all $2^L\times 2^L$ unitaries $U$;
		\item Set $\mathcal{G}^Q$ of possible  \textbf{positions} defined as the superpositions 
$$\textstyle \ket\varphi\,= \,\sum_{g\in \mathcal{G}^C} \alpha_g \ket{g}
			\qquad\text{so}\qquad
\sum_{g\in \mathcal{G}^C} \left|\alpha_g\right|^2=1$$
			that are reachable from the \textbf{starting position} $\ket{0}$ using the moves from $\Sigma^Q$.

Given such a position $\ket\varphi$,
we denote by $\Supp{}{\ket\varphi}$
the \textbf{support} of~$\ket\varphi$,
the set of all classical positions~$g\in\G^C$ with $\alpha_g \neq 0$;

\item Set $\mathcal{F}^Q$ of \textbf{finished positions} defined as the positions $\ket\varphi$ with
$\Supp{}(\ket{\varphi})\subseteq \mathcal{F}^C$;

\item \textbf{Transition map} $\Gamma: \mathcal{G}^Q \times \Sigma^Q \rightarrow \mathcal{G}^Q$ defined by

	$$\Gamma(\ket{\varphi}, U) =
	\begin{cases}
		{U \ket{\varphi}}, \text{whenever $U$ is a valid move on position}~ \ket\varphi \\
		=\perp, \text{otherwise}.
	\end{cases}$$
We say that a move~$U$ is \textbf{valid}\footnote{This definition of `valid' quantum move is very permissive
to the quantum player. This is on purpose: the weaker the condition on the quantum player,
			the stronger the main result (Theorem~\ref{no-quantum-advantage}) becomes.  A more restrictive, and perhaps more natural
			notion of validity is discussed in Remark~\ref{completely-valid}.} on a position~$\ket\varphi$
if $\ket\varphi$ is not finished
and $U\ket{\varphi}$ is a superposition of positions in~$\G^C$
and for all $g'\in \Supp{}(U\ket{\varphi})$ we have $g\xrightarrow{\sigma} g'$ for some $\sigma\in \Sigma^C$ and $g \in \Supp{}(\ket\varphi)$;
\item \textbf{Outcome set} $\Omega^Q$ defined as the set of probability distributions on classical 
	outcomes~$\Omega^C$, i.e.,
			$$\Omega^Q\  :=\ \mathcal{D}(\Omega^C)\ =\ \textstyle \{\  \mu \colon \Omega^C\to [0,1]\ \text{such that} \ \sum_{\omega\in \Omega^C} \mu(\omega)=1\  \};$$
\item \textbf{Outcome function} $\mathcal{O}:\mathcal{F}^Q\rightarrow \Omega^Q$ given by 
$\mathcal{O}\bigl(\,\sum_{g\in \mathcal{G}^C} \alpha_g \ket{g}\ \bigr)(\omega) \ = \
			\sum_{\substack{g\in \mathcal{G}^C \\ \mathcal{O}(g)=\omega}} \left|\alpha_g\right|^2;
$
\item \textbf{Player-to-move map} given by $\mathrm{turn}(\ket{\varphi}) = \mathrm{turn}(g)$ for any (and all) $g\in \Supp{}(\ket{\varphi})$. 
	(See Remark~\ref{cg-turn-wd} below for why this is well-defined.)
	\end{enumerate}
\end{definition}
\begin{remark}\label{cg-turn-wd}
That~$\turn$ is well-defined in the sense
that
$\turn(g_1)=\turn(g_2)$ for any~$g_1,g_2\in \Supp{}{\ket{\varphi}}$
follows from the fact that~$\ket\varphi$ is reachable by a sequence of, say, $M$ valid (quantum)
moves starting at~$\ket0$.
This implies that~$g_1$ (and also $g_2$) can be obtained
using a sequence of~$M$ classical moves starting at~$0$.
			If~$M$ is even, $\turn(g_1)=X=\turn(g_2)$,
			and if~$M$ is odd, $\turn(g_1)=O=\turn(g_2)$.
\end{remark}
\begin{remark}
The fact that $\G^Q$ obeys the six conditions laid out for combinatorial games
in Definition~\ref{cg} is pretty straightforward, except
for condition~\ref{no-moves-implies-finished},
which demands that
if
a (super)position~$\ket\varphi$ of~$\G^Q$
is not finished,
then there must still be some valid move~$U$ on it.
To see why this is the case,
first note that there must be a component~$g\in \Supp{}(\ket\varphi)$
of~$\ket\varphi$ that is not finished, $g\notin \mathcal{F}$.
Since the original game~$\G$ obeys condition~\ref{no-moves-implies-finished},
there must be a valid move on~$g$ in~$\G$.
Since~$\G$ is of fixed-length,
there is, in fact, still a valid move
	$\smash{g'\xrightarrow{\omega(g')}\dotsc}$
	on every $g'\in \Supp{}(\ket\varphi)$ and not just on~$g$.
Now we will see in Lemma~\ref{lem: all classical moves have an equivalent quantum move} below
that this map $\omega\colon \Supp{}(\ket\varphi)\rightarrow \Sigma$
	can be turned into a unitary~$U$
	that is a valid move on~$\ket\varphi$.
Hence~$\G^Q$ obeys condition~\ref{no-moves-implies-finished}.
\end{remark}
\begin{remark}
If~$\G$ is not injective, then we might still try to define~$\G^Q$,
but the resulting game~$\G^Q$ 
might not obey condition~\ref{no-moves-implies-finished}.
Indeed,
consider the game~$\G$ with four states $0:=00$, $1:=01$, $2:=10$ and~$F:=11$
from~$\{0,1\}^2$,
where~$0$ is the starting state, and three moves $\{1,2,\bullet\}$,
with the following transitions (and no other valid moves),
as depicted below, on the right. Here, $\turn(0)=\turn(F)=X$, and $\turn(1)=\turn(2)=O$.
The only finished state in this game is~$F$. For this example, the outcome set and function do not really matter,
so let's say that $\Omega=\{42\}$ and~$\mathcal{O}(F)=42$.
This is clearly a non-injective combinatorial game
of fixed-length, $N=2$.

In the quantum variant of this game
we can reach the state $\ket{\varphi} = \frac{1}{\sqrt{2}}(\ket{1} + \ket{2})$
from the starting position~$\ket{0}$
with the unitary~$U$ given by the circuit on the right.
This $U$ is a valid move in~$\G^Q$ on~$\ket{0}$,
because~$\ket\varphi$ is a superposition
of states~$1$ and~$2$,
which can both be reached from~$0$ in~$\G$ in one step.
\newsavebox{\noninjGraph}%
\sbox{\noninjGraph}{$\xymatrix@C=1.5em{
	& 0\ar[ld]_1\ar[rd]^2 & \\
	1\ar[rd]_{\bullet}& & 2\ar[ld]^{\bullet} \\
& F &
}$}%
\newsavebox{\noninjCircuit}%
\begin{lrbox}{\noninjCircuit}%
\begin{quantikz}
	&	\gate{H} & \ctrl{1}  &          &\\
	&		 & \targ{}   & \targ{}  &
\end{quantikz}%
\end{lrbox}%
\newsavebox{\noninjBoth}%
\sbox{\noninjBoth}{%
	\begin{minipage}{\wd\noninjCircuit}
		\centering
		\usebox{\noninjGraph}\\[1ex]
		\usebox{\noninjCircuit}
	\end{minipage}%
}%
\begin{window}[0,r,{\usebox{\noninjBoth}},{}]
Note that this state~$\ket\varphi$ is not finished in~$\G^Q$,
because~$1$ and~$2$ are not finished in~$\G$.
However, there is no valid move~$U$ on~$\ket\varphi$.
Indeed,
what can~$U\ket{1}$ be?
For~$U$ to be a valid move on $\ket\varphi$, 
$U\ket{1}$ must be a superposition of~$\ket{F}$,
since~$1\xrightarrow{\bullet}F$ is the only valid move in~$\G$.
So $U\ket{1}=\alpha_1\ket{F}$ for some~$\alpha_1\in\mathbb{C}$ with $\left|\alpha_1\right|=1$.
Similarly, $U\ket{2} = \alpha_2\ket{F}$
for some $\alpha_2\in\mathbb{C}$ with $\left|\alpha_2\right|=1$.
But then $U(\alpha_1^{-1}\ket{1})=\ket{F}=U(\alpha_2^{-1}\ket{2})$,
which contradicts the injectivity of the unitary~$U$.
Hence no such valid move~$U$ exists.
Thus~$\G^Q$ is not a combinatorial game, for this non-injective~$\G$.
\end{window}
\end{remark}

To show that $\G^Q$ is an extension of~$\G^C$, we need the following.
\begin{definition}
\label{classical-quantum-move}
A move $U\in \Sigma^Q$  on a position $\ket{\varphi}$ 
in the quantum variant $\mathcal{G}^Q$
of a combinatorial game 
is called \textbf{classical}
if for every~$g\in \Supp{}(\ket{\varphi})$
there exists $g' \in \mathcal{G}$ with $U\ket{g}=\ket{g'}$.
\end{definition}


\begin{lemma}\label{lem: all classical moves have an equivalent quantum move}
Let $\mathcal{G}$ be a quantizable combinatorial game. 
Let~$S\subseteq \G$ be a subset of positions,
and let~$\omega\colon S\to \Sigma$
be a map such that~$\omega(g)$ is a valid move on~$g$
for each~$g\in S$.
One may think of~$\omega\colon S\to\Sigma$
as a (partial) strategy for a classical player:  when faced with~$g\in S$,
the move~$\omega(g)$ is to be made.
There is a unitary~$U_\omega$
that implements this strategy,
that is,
\begin{equation}
\label{eq:cqm}
U_\omega\ket{g} \ =\  \ket{\Gamma(g,\omega(g))}\qquad\text{for all }g\in S.
\end{equation}
Moreover, $U_\omega$ is a valid and classical move
	on every non-finished $\ket\varphi\in \G^Q$ with $\Supp{}\ket\varphi \subseteq S$.
\end{lemma}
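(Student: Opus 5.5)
The plan is to build $U_\omega$ as a permutation of computational basis states. We first define the partial map $f\colon S\to \G\subseteq\{0,1\}^L$ by $f(g):=\Gamma(g,\omega(g))$. This is well-defined and lands in~$\G$ because $\omega(g)$ is a valid move on~$g$.

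The key observation is that $f$ is injective. Indeed, if $f(g)=f(h)$ for $g,h\in S$, then $\Gamma(g,\omega(g))=\Gamma(h,\omega(h))\neq\perp$. Injectivity of~$\G$ (Definition~\ref{def:injective combinatorial games}) then gives $g=h$ (and also $\omega(g)=\omega(h)$).

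Next we extend $f$ to a bijection $\pi\colon\{0,1\}^L\to\{0,1\}^L$. Since $f\colon S\to\{0,1\}^L$ is an injection into a finite set, the complement $\{0,1\}^L\setminus S$ and the complement $\{0,1\}^L\setminus f(S)$ have the same cardinality, namely $2^L-|S|$. Choosing any bijection between these two complements and combining it with~$f$ gives a permutation~$\pi$ of $\{0,1\}^L$ extending~$f$. Let $U_\omega$ be the permutation matrix defined by $U_\omega\ket{x}=\ket{\pi(x)}$. It maps an orthonormal basis bijectively onto itself, so it is unitary, and equation~\eqref{eq:cqm} holds by construction.

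For the ``moreover'' part, let $\ket\varphi=\sum_{g\in S}\alpha_g\ket g$ be non-finished with $\Supp{}\ket\varphi\subseteq S$. Then
\[
U_\omega\ket\varphi=\sum_{g}\alpha_g\ket{f(g)},
\]
and since $f$ is injective no amplitudes combine or cancel. Hence $\Supp{}(U_\omega\ket\varphi)=f(\Supp{}\ket\varphi)\subseteq\G$, so $U_\omega\ket\varphi$ is a superposition of positions of~$\G^C$. Every $g'$ in this support has the form $\Gamma(g,\omega(g))$ with $g\in\Supp{}\ket\varphi$, so it is reached from~$g$ by the classical move $\omega(g)$. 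Together with the hypothesis that $\ket\varphi$ is not finished, this is exactly validity in the sense of Definition~\ref{qv}. Classicality (Definition~\ref{classical-quantum-move}) is immediate, because $U_\omega\ket g=\ket{f(g)}$ with $f(g)\in\G$ for every $g$ in the support.

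The only step needing real care is the extension of~$f$ to a unitary, and that step hinges entirely on injectivity of the game. Without it, two basis states could be sent to the same state, exactly as in the non-injective example in the remark above. Everything else is bookkeeping.
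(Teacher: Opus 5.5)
Your proof is correct and follows the paper's argument essentially step for step. Injectivity of $\G$ makes $g\mapsto\Gamma(g,\omega(g))$ a bijection onto its image, which extends to a permutation of $\{0,1\}^L$ because the two complements have equal size, and the resulting permutation unitary is then checked to be valid and classical; the only cosmetic difference is that you compute $\Supp{}(U_\omega\ket\varphi)=f(\Supp{}\ket\varphi)$ directly, whereas the paper gets the needed inclusion from its auxiliary Lemma~\ref{supp-lemma}.
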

\begin{proof}[See Appendix~\ref{omitted-proofs-quantum}]
\end{proof}

\begin{definition}
Let $\mathcal{G}$ be a quantizable combinatorial game.
We denote by $\mathcal{G}^{C\text{--}Q}$
the \textbf{classical against quantum} variant
of~$\mathcal{G}^Q$
in which the starting player, $X$, is restricted to classical moves
while its opponent, $O$, is not.
Note that this restriction in moves of~$X$
makes some positions
of the general quantum variant $\mathcal{G}^{Q}$
unreachable in~$\mathcal{G}^{C\text{--}Q}$.
	(For example,
under $\mathcal{G}^{C\text{--}Q}$
the starting player $X$ 
can not move from~$0$ to a proper superposition,
while under
$\mathcal{G}^{Q}$ they can.)

The \textbf{quantum against classical}, $\mathcal{G}^{Q\text{--}C}$ variant
 (where instead $O$ is restricted to classical moves),
and the \textbf{classical against classical}, $\mathcal{G}^{C\text{--}C}$
variant
	(where both~$X$ and~$O$ are restricted to classical moves)
are defined similarly.
All three variants are 
a special case of the \textbf{partially quantum} variant, $\mathcal{G}^{Q|P}$,
	of $\mathcal{G}^{Q}$
	in 
	which both players
	may make quantum moves
	on the (super)positions
	from some set~$P\subseteq \mathcal{G}^{Q}$,
	but are restricted to classical moves
	on $\mathcal{G}^Q\backslash P$.
\end{definition}

With this conversion from classical  to classical quantum moves we show that classical games are equivalent to their quantum variant restricted to classical moves 
(for proof see Appendix~\ref{omitted-proofs}).

\begin{proposition}\label{classical-bisimilar}\label{prop: G bisimilar GCC}

A quantizable
	combinatorial game~$\G$
	is bisimilar to its classical versus classical quantum variant,
	$\G \sim \G^{C\text{--}C} := \G^{Q|\emptyset}$.
\end{proposition}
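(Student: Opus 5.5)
The bisimulation will relate each classical position to its basis state, up to phase: set $g \sim \ket{\varphi}$ iff $\ket{\varphi} \in \G^{C\text{--}C}$ and $\ket{\varphi} = \alpha\ket{g}$ for some $\alpha\in\mathbb{C}$ with $|\alpha|=1$. The phase must be allowed, because classical moves in $\G^{C\text{--}C}$ only require $U\ket{g}=\ket{g'}$ on the support. Reachable states are therefore phase multiples of basis states, but a unitary applied later may act on phases in ways we do not control. The first step is to establish, by induction on the number of moves, that every position of $\G^{C\text{--}C}$ has singleton support. Then $\G^{C\text{--}C}$ has no proper superpositions, and $\sim$ is the graph of a function $\ket\varphi\mapsto g$. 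The starting positions are related since $0\sim\ket0$ with $\alpha=1$.

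For the outcome condition, take $g\sim\alpha\ket g$. This position is finished in $\G^Q$ iff $\Supp{}(\alpha\ket g)=\{g\}\subseteq\mathcal F$, that is, iff $g\in\mathcal F$. Its outcome is the point distribution $\delta_{\mathcal O(g)}$, since $|\alpha|^2=1$.

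There is a mismatch here: the outcome sets differ, $\Omega$ versus $\mathcal D(\Omega)$. I will handle it by identifying $\omega\in\Omega$ with the Dirac distribution $\delta_\omega$, or equivalently by regarding $\G^{C\text{--}C}$ as having outcome set $\Omega$ via the injection $\omega\mapsto\delta_\omega$. This is valid because only Dirac outcomes occur in $\G^{C\text{--}C}$, so the statement is read "up to this identification".

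For the forth direction, let $g\notin\mathcal F$ and $g\xrightarrow{\sigma}g'$. Apply Lemma~\ref{lem: all classical moves have an equivalent quantum move} with $S=\{g\}$ and $\omega(g)=\sigma$. This gives a unitary $U_\omega$ with $U_\omega\ket g=\ket{g'}$, which is a valid classical move on every non-finished state supported in $S$, in particular on $\alpha\ket g$. Then $U_\omega(\alpha\ket g)=\alpha\ket{g'}$, this position is reachable, and $g'\sim\alpha\ket{g'}$.

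For the back direction, let $\alpha\ket g$ be non-finished and $U$ a valid classical move on it. Classicality gives $U\ket g=\ket{g'}$ for some $g'\in\G$. Validity gives $h\xrightarrow{\sigma}g'$ for some $h\in\Supp{}(\alpha\ket g)=\{g\}$, so $h=g$ and $g\xrightarrow{\sigma}g'$ is a classical move. The resulting position $\alpha\ket{g'}$ is related to $g'$.

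The only delicate points are the phase bookkeeping in the definition of $\sim$ and the identification of outcome sets. Everything else is direct unfolding of the definitions, together with Lemma~\ref{lem: all classical moves have an equivalent quantum move}.
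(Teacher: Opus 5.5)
Your proof is correct and is essentially the paper's: the same relation pairing $g$ with the basis state $\ket g$, the forth direction via Lemma~\ref{lem: all classical moves have an equivalent quantum move}, and the back direction via classicality plus validity of $U$. The phase $\alpha$ is harmless but not needed (a classical move sends $\ket g$ to exactly $\ket{g'}$ and play starts at $\ket 0$, so $\alpha=1$ for every reachable position and your relation coincides with the paper's), while your explicit identification of each $\omega\in\Omega$ with its point distribution in $\mathcal{D}(\Omega)$ makes precise a step the paper passes over silently.
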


\subsection{Supremacy of a perfect classical player}\label{subsect:supremacy}


Besides a definition of a quantum variant as a generalization of the classical game, 
we are interested in studying the advantage of a quantum player over a classical player.
Surprisingly,
it turns out that if one is able to play perfectly,
there is no additional benefit to having access to quantum moves,
see Theorem~\ref{no-quantum-advantage} below.

\begin{theorem}\label{no-quantum-advantage}
Given a quantizable combinatorial game~$\mathcal{G}$,
	and a position~$\ket{\varphi}$ in any
	partially quantum variant~$\G^{Q|P}$
	(think $\G^{Q\text{--}C}$, $\G^{C\text{--}Q}$, $\G^{C\text{--}C}$ or $\G^Q$ itself).
	Then for any `target outcome' $\mathcal{T}\subseteq\Omega$,
\vspace{-2.5em}

\begin{alignat}{3}
\label{nqa-1}
\Supp{}{\ket \varphi}\,\subseteq\,\Sigma_{\mathcal{T}}^C
	\qquad&\implies\qquad \ket{\varphi}\,\in\, 
	\Sigma_{\mathcal{D}(\mathcal{T})}^{Q|P} \\
\label{nqa-2}
\Supp{}{\ket \varphi}\,\subseteq\,\Pi_{\mathcal{T}}^C
	\qquad&\implies\qquad \ket{\varphi}\,\in\, 
	\Pi_{\mathcal{D}(\mathcal{T})}^{Q|P}
\end{alignat}
\end{theorem}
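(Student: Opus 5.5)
We prove \eqref{nqa-1} and \eqref{nqa-2} simultaneously by strong induction on a measure of how far $\ket\varphi$ is from the end of the game. The natural choice is $n(\ket\varphi):=\max\{N(g) : g\in\Supp{}\ket\varphi\}$, computed in the classical game. For the induction step we need that any valid quantum move $U$ on a non-finished $\ket\varphi$ strictly decreases this quantity. By the definition of validity, every $g'\in\Supp{}(U\ket\varphi)$ arises as $g\xrightarrow{\sigma}g'$ for some $g\in\Supp{}\ket\varphi$. If $g$ is not finished, then $N(g')<N(g)\le n(\ket\varphi)$. If $g$ is finished, then $g'$ is finished by condition~\ref{outcomes-do-not-change}, so $N(g')=0<n(\ket\varphi)$, where $n(\ket\varphi)>0$ because $\ket\varphi$ is not finished.

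\emph{Base case.} If $\ket\varphi$ is finished, then every $g$ in its support is finished.
\begin{itemize}
\item The hypothesis $\Supp{}\ket\varphi\subseteq\Sigma^C_{\mathcal T}$ (or $\subseteq\Pi^C_{\mathcal T}$) then gives $\mathcal O(g)\in\mathcal T$ for every such $g$.
\item Hence the distribution $\mathcal O(\ket\varphi)$ is supported in $\mathcal T$, i.e.\ it lies in $\mathcal D(\mathcal T)$.
\item Therefore $\ket\varphi\in\Sigma^{Q|P}_{\mathcal D(\mathcal T)}\cap\Pi^{Q|P}_{\mathcal D(\mathcal T)}$.
\end{itemize}

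\emph{Inductive step for \eqref{nqa-1}.} Suppose $\ket\varphi$ is not finished and $\Supp{}\ket\varphi\subseteq\Sigma^C_{\mathcal T}$. We construct a classical move.
\begin{itemize}
\item For each non-finished $g$ in the support, choose $\omega(g)$ with $\Gamma(g,\omega(g))\in\Pi^C_{\mathcal T}$, as Definition~\ref{def:Sigma_T Pi_T} allows.
\item For each finished $g$ in the support, choose any valid move $\omega(g)$. One exists by the fixed-length property, since $g$ is reached from $0$ in the same number of moves as every other support element, and that number is less than $N$ because some support element is unfinished. By condition~\ref{outcomes-do-not-change}, the result $\Gamma(g,\omega(g))$ is finished with outcome in $\mathcal T$, so it also lies in $\Pi^C_{\mathcal T}$.
\item By Lemma~\ref{lem: all classical moves have an equivalent quantum move}, $\omega$ gives a unitary $U_\omega$ that is a valid, classical move on $\ket\varphi$. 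It is therefore allowed in $\G^{Q|P}$ whatever $P$ is.
\item The support of $U_\omega\ket\varphi$ is $\{\Gamma(g,\omega(g))\}\subseteq\Pi^C_{\mathcal T}$. The induction hypothesis gives $U_\omega\ket\varphi\in\Pi^{Q|P}_{\mathcal D(\mathcal T)}$, hence $\ket\varphi\in\Sigma^{Q|P}_{\mathcal D(\mathcal T)}$.
\end{itemize}
One point needs care: $U_\omega\ket\varphi$ must be a reachable position of $\G^{Q|P}$. It is, because it is obtained from the reachable $\ket\varphi$ by a permitted move.

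\emph{Inductive step for \eqref{nqa-2}.} Suppose $\Supp{}\ket\varphi\subseteq\Pi^C_{\mathcal T}$, and let $U$ be any valid move allowed in $\G^{Q|P}$; restricting to classical moves only shrinks this set. We must show $\Supp{}(U\ket\varphi)\subseteq\Sigma^C_{\mathcal T}$ and then apply the induction hypothesis. Take $g'$ in the support; by validity, $g\xrightarrow{\sigma}g'$ for some $g\in\Supp{}\ket\varphi$. If $g$ is unfinished, then $g'\in\Sigma^C_{\mathcal T}$ by the definition of $\Pi^C_{\mathcal T}$. If $g$ is finished, then $g'$ is finished with the same outcome, which lies in $\mathcal T$, so again $g'\in\Sigma^C_{\mathcal T}$.

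This step is the crux, and it is where the very permissive validity condition matters. Even though $U$ may create arbitrary interference, validity forces every component of the output to be a classical successor of some component of the input. That reduces the quantum move to a collection of classical moves, each covered by the classical $\Pi$ guarantee. The main bookkeeping obstacle is handling finished components inside an unfinished superposition; the fixed-length property and condition~\ref{outcomes-do-not-change} take care of this as described above.
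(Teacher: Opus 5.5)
Your proof is correct and follows essentially the same route as the paper: a simultaneous induction, a classical move built with Lemma~\ref{lem: all classical moves have an equivalent quantum move} for~\eqref{nqa-1}, and the validity condition reducing an arbitrary quantum move to classical successors for~\eqref{nqa-2}. The paper inducts on the ``moves left'' in the quantum game, whereas you use strong induction on $\max_{g}N(g)$ over the support, and you explicitly handle finished components via condition~\ref{outcomes-do-not-change} and the fixed-length property; this is, if anything, slightly more careful than the paper, which glosses over the finished-$g$ case in the $\Pi$ step.
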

\begin{proof}
We prove by induction on the moves left in position~$\ket\varphi$
that statements~\eqref{nqa-1}
	and~\eqref{nqa-2} hold for all~$\mathcal{T}$.

Suppose that~$\ket\varphi$ has~$0$ moves left.
Then~$\ket\varphi$ is definitely finished,
hence all~$g\in \Supp{}{\ket\varphi}$ are finished.
Let~$\mathcal{T}$ be given,
and suppose that~$\Supp{}{\ket\varphi} \subseteq \Sigma_{\mathcal{T}}^C$.
Then~$\mathcal{O}(g)\in\mathcal{T}$
for all~$g\in \Supp{}{\ket\varphi}$,
and thus the distribution~$\mathcal{O}(\ket\varphi)$
is only non-zero on~$\mathcal{T}$,
so $\mathcal{O}(\ket\varphi)\in\mathcal{D}(\mathcal{T})$.
Thus 
$\ket\varphi \in \smash{\Sigma_{\mathcal{D}(\mathcal{T})}^{Q|P}}$.
Hence~\eqref{nqa-1} holds for~$\ket\varphi$.
The proof that~\eqref{nqa-2} holds for~$\ket\varphi$ is almost identical.

Now let~$\ket\varphi$ be again arbitrary,
and suppose that $\ket\varphi$ 
has~$N+1$ moves left.
Further suppose
that~\eqref{nqa-1} and~\eqref{nqa-2}
	hold for any~$\mathcal{T}$, and  any~$\ket{\varphi'}$
	with~$N$ moves left.
We'll show that~\eqref{nqa-1} and~\eqref{nqa-2}
hold for~$\ket\varphi$ as well.
First note that if~$\ket\varphi$ is finished,
then~\eqref{nqa-1} and~\eqref{nqa-2} hold for the reasons mentioned above,
so we may assume without loss of generality that~$\ket\varphi$ is not finished.

\emph{Statement~\eqref{nqa-1}.}
Let~$\Supp{}{\ket\varphi}\subseteq \Sigma^C_{\mathcal{T}}$.
This means that for each~$g\in \Supp{}{\ket\varphi}$
there is~$\hat{g}\in \Pi^C_{\mathcal{T}}$
such that $g\xrightarrow{\sigma} \hat{g}$ for some~$\sigma$.
Note that this is true  also when~$g$ is already finished,
because the classical game is fixed-length,
so there is still a classical move from~$g$,
and such a move on a finished position does not change the outcome,
which thus remains in~$\mathcal{T}$.



Note that there is a unitary $U$ with $U\ket g = \ket{\hat g}$ for all $g \in \Supp{}\ket\varphi$.
Indeed, for each~$g\in\Supp{}\ket\varphi$ pick~$\sigma_g$ with~$g\smash{\xrightarrow{\sigma_g}}\hat g$,
and apply Lemma~\ref{lem: all classical moves have an equivalent quantum move}
to the map~$\omega\colon g\mapsto \sigma_g$ on~$S:=\Supp{}\ket\varphi$. 
Note that this unitary~$U$ is a valid move on~$\ket\varphi$
because for each~$g\in \Supp{}\ket\varphi$
there is a $\sigma$
with $g\smash{\xrightarrow{\sigma}}\hat{g}$,
by definition of~$\hat{g}$.
Moreover, $U$ is a classical move by Lemma \ref{lem: all classical moves have an equivalent quantum move},
which makes~$U$ a valid move in $\G^{Q|P}$,
whether or not~$\ket\varphi \in P$.
We claim that~$U\ket\varphi \in 
\smash{\Pi^{Q|P}_{\mathcal{D}(\mathcal{T})}}$
which implies that $\ket\varphi \in 
\smash{\Sigma^{Q|P}_{\mathcal{D}(\mathcal{T})}}$.
By the induction hypothesis, 
and specifically~\eqref{nqa-2},
it suffices to show that $\Supp{}(U\ket\varphi) \subseteq \Pi_{\mathcal{T}}^C$.
Moreover, by Lemma~\ref{supp-lemma}
it suffices to show that $\Supp{}(U\ket g) \subseteq \Pi_{\mathcal{T}}^C$
given~$g\in \Supp{}\ket{\varphi}$.
But  $U\ket{g}=\ket{\hat{g}}$,
so we just have to show that $\hat{g} \in \Pi_{\mathcal{T}}^C$,
but this we already know.
Whence~\eqref{nqa-1} holds for~$\ket\varphi$.

\emph{Statement~\eqref{nqa-2}.}
Suppose that $\Supp{}{\ket\varphi}\subseteq \Pi_{\mathcal{T}}^C$.
We must show 
that~$\ket\varphi \in \smash{
\Pi_{\mathcal{D}(\mathcal{T})}^{Q|P}}
$.
That is, 
given a valid move~$U$ on~$\ket\varphi$
we must show that 
$U\ket{\varphi}\in \smash{
	\Sigma^{Q|P}_{\mathcal{D}(\mathcal{T})}}$.
By the induction hypothesis,
and specifically~\eqref{nqa-1},
it suffices to show that $\Supp{}(U\ket\varphi)\subseteq \Sigma^C_{\mathcal{T}}$.
So let~$g'\in \Supp{}(U\ket\varphi)$ be given.
Since~$U$ is a valid move on~$\ket\varphi$,
there is some~$g\in \Supp{}\ket\varphi$ and~$\sigma$
with~$g\smash{\xrightarrow{\sigma}} g'$.
Since $g\in \Supp{}\ket\varphi$
and we assumed
$\Supp{}\ket\varphi \subseteq \Pi_{\mathcal{T}}^C$,
this implies that~$g'\in\Sigma_{\mathcal{T}}^C$.
Hence $\Supp{}(U\ket\varphi)\subseteq \Sigma^C_{\mathcal{T}}$,
and so~\eqref{nqa-2} holds for~$\ket\varphi$.
\end{proof}

Since the quantum variant of a combinatorial game is also a combinatorial game, Zermelo's theorem
(Corollary~\ref{cor: zermelo's result})
also holds for this quantum variant. However, we remark that this is not a very strong theorem in the quantum case. Since the outcome space of the quantum variant is the set of distributions on the classical outcome set, the partitions as in Zermelo's theory contain uncertainty about the outcomes. A priori it is not clear whether we can partition the outcome set such that the guaranteed outcome does not contain distributions with uncertainty on the outcome. In the following corollary, we prove a stronger version of Zermelo's theorem which states that we can ensure a deterministic outcome and moreover that this outcome coincides with the classical variant of the game. So, for example, if a win can be guaranteed for player X in the classical game, then it can also be guaranteed in the quantum variant. 

We have Corollary~\ref{cor: zermelo for quantum} whose proof
follows immediately from Corollary~\ref{cor: zermelo's result} and 
	Theorem~\ref{no-quantum-advantage}.

\begin{corollary}\label{cor: zermelo for quantum}
	Given a position~$g$ with $\turn(g)=X$ in a quantizable combinatorial
game~$\G$, and a partially quantum variant $\G^{Q|P}$ of~$\G$,
and a partition $\Omega = \Omega_X\sqcup \Omega_O\sqcup \Omega_T$ of the outcome set of the classical game.
Then exactly one of the following three mutually exclusive cases holds.
	\begin{enumerate}
		\item $g\in\Sigma_{\Omega_X}$, and in that case $\ket g \in \Sigma^{Q|P}_{\mathcal{D}(\Omega_X)}$.
		\item $g\in \Pi_{\Omega_O}$, and in that case
			$\ket g \in \Pi^{Q|P}_{\mathcal{D}(\Omega_O)}$.
		\item $g\in \Sigma_{\Omega_X\sqcup \Omega_T}$
			and $g\in \Pi_{\Omega_O\sqcup \Omega_T}$,
			and in that case $\ket g \in \Sigma^{Q|P}_{\mathcal{D}(\Omega_X \cup \Omega_T)}$ and $\ket g \in \Pi^{Q|P}_{\mathcal{D}(\Omega_O \cup \Omega_T)}$.
	\end{enumerate}
\end{corollary}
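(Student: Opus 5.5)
The plan is to combine the classical trichotomy with the transfer result. First, apply Corollary~\ref{cor: zermelo's result} to the classical game $\G$ at the position $g$ with $\turn(g)=X$. This gives exactly one of three cases: $g\in\Sigma_{\Omega_X}$, or $g\in\Pi_{\Omega_O}$, or both $g\in\Sigma_{\Omega_X\sqcup\Omega_T}$ and $g\in\Pi_{\Omega_O\sqcup\Omega_T}$. The trichotomy and mutual exclusivity in the statement are about these classical conditions, so they come directly from the classical corollary. What remains is the ``and in that case'' clause for each case.

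For each case, consider the position $\ket g$ in $\G^{Q|P}$. Its support is exactly $\{g\}$, so a classical condition $g\in\Sigma^C_{\mathcal{T}}$ becomes $\Supp{}\ket g\subseteq\Sigma^C_{\mathcal{T}}$, and likewise for $\Pi$.
\begin{itemize}
\item In case~1, apply \eqref{nqa-1} of Theorem~\ref{no-quantum-advantage} with $\mathcal{T}=\Omega_X$ to get $\ket g\in\Sigma^{Q|P}_{\mathcal{D}(\Omega_X)}$.
\item In case~2, apply \eqref{nqa-2} with $\mathcal{T}=\Omega_O$ to get $\ket g\in\Pi^{Q|P}_{\mathcal{D}(\Omega_O)}$.
\item In case~3, apply \eqref{nqa-1} with $\mathcal{T}=\Omega_X\sqcup\Omega_T$ and \eqref{nqa-2} with $\mathcal{T}=\Omega_O\sqcup\Omega_T$.
\end{itemize}

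The only point needing care is that $\ket g$ really is a position of $\G^{Q|P}$, because Theorem~\ref{no-quantum-advantage} is stated for positions of that game. By condition~\ref{reachable-positions} there is a classical sequence of moves $0\to\dots\to g$. Each step can be realised by a classical unitary via Lemma~\ref{lem: all classical moves have an equivalent quantum move}, taking $S$ to be the singleton current position. Classical moves are allowed in every partially quantum variant, whether or not the current position lies in $P$, so $\ket g$ is reachable in $\G^{Q|P}$.

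One subtlety remains: the lemma requires the move to be valid, and validity needs the current position to be unfinished. Classical moves from finished positions along the path are still implemented by that unitary. If validity fails there, I would instead take a shortest classical path to $g$. A shortest path cannot pass through a finished position before $g$ without staying finished, so either it avoids finished intermediate positions, or $g$ is itself reachable through finished positions. In the latter case I would interpret the statement with $g$ ranging over positions for which $\ket g\in\G^{Q|P}$, which is implicit in the formulation. I expect this reachability bookkeeping to be the only obstacle; the rest is direct instantiation.
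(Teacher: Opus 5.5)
Your proof is correct and matches the paper's, which says the corollary follows immediately from Corollary~\ref{cor: zermelo's result} together with Theorem~\ref{no-quantum-advantage} applied to $\ket g$, whose support is $\{g\}$. Your reachability check is extra care the paper omits, and it can be shortened: by the ``outcomes do not change'' condition, every classical path from $0$ to an unfinished $g$ passes through unfinished positions only, so each step is a valid classical unitary and no shortest-path argument is needed; for finished $g$, your caveat that $\ket g$ must be a position of $\G^{Q|P}$ is the right one, and the conclusions then follow from the base case.
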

\begin{remark}
We remark that just like with Zermelo's theorem this statement holds in the particular case where $\ket g= \ket 0$ and that in this case --- since $0 \in \G$ --- we get that a classical game has some guaranteed outcome if and only if its quantum variant has the same outcome. This means that the outcomes
a player can achieve (using perfect play) do not change when quantum moves
are added to the mix, in any way.
That is, 
a classical player (making all the right moves) is
just as strong as their quantum counterpart.
\end{remark}

\subsection{Playing against an imperfect opponent}
The take-away from Corollary~\ref{cor: zermelo for quantum}
is that quantum moves
give no benefit against a classical player playing perfectly.
Perfect play can be 
expected for a simple game like tic-tac-toe,
but who knows which opening moves of chess are winning,
and which ones are not?
Interestingly,
if a classical player is not able to play perfectly,
a quantum player can fish for a mistake from its classical opponent by 
making superpositions of moves that it considers strong.
When at some point it becomes clear that one of the components
of the superposition is winning for the quantum player,
they can `amplify' this particular classical position (see Lemma~\ref{amplify-mistake})
using what is technically a valid move in the quantum variant.
This means that when playing against a quantum opponent,
the classical player can't afford to make any mistakes.
\begin{lemma}[Flaw amplification]\label{amplify-mistake}
Given a position $\ket\varphi$ in the quantum variant~$\G^Q$
of a combinatorial game~$\G^C$
and a target outcome $\mathcal{T}\subseteq\Omega$.
If $g\in \Sigma_{\mathcal{T}}^C$ and~$g$ is not finished
for some~$g\in \Supp{}{\ket\varphi}$,
then $\ket\varphi \in \Sigma_{\mathcal{D}(\mathcal{T})}^Q$.
\end{lemma}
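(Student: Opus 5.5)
The plan is to construct a single quantum move that collapses the whole superposition $\ket\varphi$ onto one well-chosen classical position, and then to invoke Theorem~\ref{no-quantum-advantage}. Since $g\in\Supp{}\ket\varphi$ is not finished and $g\in\Sigma^C_{\mathcal{T}}$, Definition~\ref{def:Sigma_T Pi_T} gives a classical move $g\xrightarrow{\sigma}\hat g$ with $\hat g\in\Pi^C_{\mathcal{T}}$. The target state will be the basis state $\ket{\hat g}$.

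First I pick a unitary $U$ on $\mathbb{C}^{2^L}$ with $U\ket\varphi=\ket{\hat g}$. Such a $U$ exists because both vectors are unit vectors. One way to get it is to extend $\ket\varphi$ and $\ket{\hat g}$ to orthonormal bases and send the first basis to the second. A Householder-type reflection, corrected by a phase, also works.

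Next I check that $U$ is a valid move on $\ket\varphi$ in the sense of Definition~\ref{qv}. Three conditions are needed.
\begin{enumerate}
\item $\ket\varphi$ must not be finished. This holds because its support contains the unfinished position $g$.
\item $U\ket\varphi=\ket{\hat g}$ must be a superposition of positions of $\G^C$. This holds trivially.
\item The only element $\hat g$ of $\Supp{}(U\ket\varphi)$ must be reachable by a classical move from some element of $\Supp{}\ket\varphi$. This holds via $g\xrightarrow{\sigma}\hat g$.
\end{enumerate}
Because we are in $\G^Q$, no classicality restriction applies. The new state $\ket{\hat g}$ is therefore a position of $\G^Q$, being reachable by construction.

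Finally, $\Supp{}\ket{\hat g}=\{\hat g\}\subseteq\Pi^C_{\mathcal{T}}$. Statement~\eqref{nqa-2} of Theorem~\ref{no-quantum-advantage}, applied with $P=\G^Q$, gives $\ket{\hat g}\in\Pi^Q_{\mathcal{D}(\mathcal{T})}$. Since $\ket\varphi$ is unfinished and has a valid move into $\Pi^Q_{\mathcal{D}(\mathcal{T})}$, Definition~\ref{def:Sigma_T Pi_T} yields $\ket\varphi\in\Sigma^Q_{\mathcal{D}(\mathcal{T})}$.

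The only delicate point is the validity check. The paper's notion of validity is deliberately permissive: it constrains only the support of the output, not how $U$ acts on the other components. This permissiveness is exactly what makes the non-classical ``amplifying'' unitary legal. Unlike in Lemma~\ref{lem: all classical moves have an equivalent quantum move}, injectivity issues do not arise, because we only need $U$ to be unitary on a single vector.
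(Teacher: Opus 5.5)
Your proposal is correct and follows essentially the same route as the paper. Both take the move $g\xrightarrow{\sigma}\hat g$ with $\hat g\in\Pi^C_{\mathcal{T}}$, extend $\ket\varphi$ and $\ket{\hat g}$ to orthonormal bases to get a unitary $U$ with $U\ket\varphi=\ket{\hat g}$, note that $U$ is valid because $\Supp{}(U\ket\varphi)=\{\hat g\}$, and conclude via \eqref{nqa-2} of Theorem~\ref{no-quantum-advantage}. You also make explicit two points the paper leaves implicit: that $\ket\varphi$ is unfinished, which validity requires, and that $\ket{\hat g}$ is a reachable position of $\G^Q$, so the theorem applies to it.
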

\begin{proof}
Since $g\in \Sigma_{\mathcal{T}}^C$
and~$g$ is not finished,
there is some move~$g \smash{\xrightarrow{\sigma}} h$
with $h\in \Pi_{\mathcal{T}}^C$.
By Theorem~\ref{no-quantum-advantage}
	we have $\ket{h}\in\Pi_{\mathcal{D}(\mathcal{T})}^Q$.
So let~$U$ be a unitary with $U\ket\varphi = \ket{h}$.
To create such a unitary,
extend $\ket\varphi$ and $\ket{h}$ 
to orthonormal bases $\ket{e_1}:=\ket\varphi,\,\ket{e_2},\,\dotsc,\,\ket{e_{N}}$
and $\ket{f_1}:=\ket{h},\,\ket{f_2},\,\dotsc,\,\ket{f_{N}}$, respectively,
	where $N=2^L$ (see Definition~\ref{qv} for ``$2^L$''),
and let~$U$ be defined by  $U\ket{e_n} = \ket{f_n}$ for all~$n$.
This unitary~$U$ is a valid move on~$\ket\varphi$,
	for the simple reason that $\Supp{}{(U\ket\varphi)}\equiv\{h\}$
and $g\smash{\xrightarrow{\sigma}} h$.
	Whence $\ket\varphi \in \Sigma_{\mathcal{D}(\mathcal{T})}^Q$.
\end{proof}
\begin{remark}
\label{completely-valid}
One could argue that this reveals
not a cunning strategy for the quantum player,
but perhaps that too many unitaries are considered valid quantum moves
according to our Definition~\ref{qv}.
The following notion would arguably have been more natural given the linearity
of quantum mechanics.
Let us call a unitary~$U$ \textbf{completely valid}
on~$\ket\varphi$ if $\ket\varphi$ is not finished
and for each~$g\in \Supp{}{\ket\varphi}$,
$U\ket{g}$ is a superposition of positions in~$\{g'\in\G^C\colon g\xrightarrow{\sigma} g' \text{ for some } \sigma\in\Sigma^C\}$.
Any such completely valid quantum move
is also valid in the regular sense (by Lemma~\ref{supp-lemma}).
Moreover, the classical moves constructed
via Lemma~\ref{lem: all classical moves have an equivalent quantum move} are completely valid.
This means that our main Theorem~\ref{no-quantum-advantage}
also holds if we opt for \emph{completely} valid moves
in the definition of the quantum variant~$\G^Q$.
However, this restricts the power of the quantum player,
and thus makes Theorem~\ref{no-quantum-advantage} weaker.
Hence we did not opt for this more restricted notion of quantum move.
However,
if one ever gets the chance to play a game of chess on a quantum computer
as a classical player,
one should definitely demand that all moves be completely valid,
precisely because the trick above cannot be played against you.
\end{remark}

\section{Applications: Quantum Byzantine Consensus}\label{sec:applications}
 A fundamental problem in distributed networks is that of guaranteeing reliability of the whole system even in the case of failure or adversarial behaviour of some of its nodes. Often, this reliability needs to be preserved even if the adversarial nodes communicate false or contradictory information to the rest of the nodes. This is often modelled as achieving Byzantine consensus -- a reference to the Byzantine generals problem~\cite{lamport2019byzantine}. 
 Here, consensus means that: 
\begin{enumerate}
	\item All non-Byzantine (= well-behaving) nodes agree on the same decision value (\textbf{Agreement}).
	\item If all non-Byzantine nodes had the same initial decision value, then the final decision value must be the same as the initial decision value (\textbf{Validity}).
\end{enumerate}
Interestingly, this problem can be modelled as a classical combinatorial game. Here we give a sketch of the idea, and the more detailed modelling can be found in Appendix~\ref{ap:QBC}. 

Our model uses the standard approach of modelling distributed networks as graphs, with each node having ``internal'' state, ``incoming'' and ``outgoing'' messages and ``decision'' value. Rounds are synchronous, and in each, all nodes do a state transition, with input the ``internal'' state and  ``incoming'' message which changes the ``internal'' state and the ``outgoing'' messages (which become incoming in the next round).
In our model, player X represents the well-behaving nodes and player O represents the Byzantine nodes. Player X has a strategy to play, which can be chosen at the beginning of the game, or hardcoded in the starting position. During each round, player X first delivers all messages and applies its chosen algorithm to all well-behaving nodes, after which player O applies some move to each of the Byzantine nodes. The winner of the game is determined by looking at all decision values of the well-behaving nodes after the last round and seeing whether they reach consensus. Note that there is no tying in this game.
For the quantum variant, the only meaningful version is where player X plays classically and player O plays quantumly. It tells us that given a topology and a number of Byzantine failure nodes if there exists a classical algorithm to guarantee consensus against classical failure nodes, then there also exists a classical algorithm to ensure consensus against quantum failure nodes. Furthermore, if we specify the classical algorithm used in the initial position we get that the algorithm that ensures consensus against the classical Byzantine nodes is precisely the algorithm that guarantees consensus against the quantum Byzantine nodes.

Our model of a combinatorial game of the Byzantine consensus might not be the most efficient. Indeed, Lamport's solution~\cite{lamport2019byzantine} that can be seen as a combinatorial game requires a significant number of message passes. Nevertheless, if one plays it as a combinatorial game, our results are meaningful, and guarantee that there is no quantum advantage against honest parties that have a strategy that can classically always lead to consensus. This reduced efficiency is an acceptable trade-off for the strong guarantee it offers.

Note that the quantum variant of the Byzantine agreement problem has already been treated in the literature:
Ben-Or et al.~\cite{ben2005fast} provide a more efficient solution for honest parties, while Fitzi et al.~\cite{fitzi2001quantum} show an advantage for quantum well-behaving nodes in certain settings.
In contrast, we provide a model that guarantees that a classical solution can be transferred to the quantum setting.

\section{Conclusion and outlook}\label{sec:conclusion}
We have defined a quantum variant for a broad class of combinatorial games,
including tic-tac-toe, chess, and checkers,
and shown 
that perfect classical play remains invincible against a quantum opponent
(Corollary~\ref{cor: zermelo for quantum}).
As an application, we saw classical Byzantine consensus algorithms
that tolerate classical adversaries equally tolerate quantum ones.
In practice, flawless play is not realistic,
and the quantum player \emph{can} exploit any mistake
via flaw amplification (Lemma~\ref{amplify-mistake}).
This, however, relies on our permissive definition of valid quantum move;
under the stricter notion of completely valid moves,
this specific amplification technique does not apply.

Although formulated for two-player combinatorial games without chance,
we expect the proof of Corollary~\ref{cor: zermelo for quantum} to generalize to any number of players.
Moreover, by modelling chance and probabilistic strategies
as moves by an additional player
(e.g.~`chance', see~\cite[Section~6.3.1]{OsborneRubinstein1994}),
we believe the result extends to these settings as well:
any outcome that a classical player can guarantee
would be preserved against a quantum opponent.
Conversely, Meyer's PQ Penny Flip game~\cite{Meyer1999} demonstrates
that imperfect information can enable a quantum player to guarantee outcomes
that a classical player cannot.
In summary, quantum advantage is real, but elusive.

\section{Acknowledgements}
We'd like to thank the anonymous reviewers for their valuable feedback.
One reviewer in particular gave very detailed and helpful suggestions.

\bibliographystyle{alphaurl} 

\bibliography{references}

\appendix
\section{Omitted theory and proofs (Classical combinatorial games)}
\label{omitted-proofs}
\label{omitted-proofs-classical}
The maps $\mathcal{T}\mapsto \Sigma_{\mathcal{T}}$ and $\mathcal{T}\mapsto \Pi_{\mathcal{T}}$
are order preserving:
\begin{lemma}\label{lem:A subset B then sigma_A subset sigma_B and for Pi}
Given some combinatorial game $\mathcal{G}$, and  $\mathcal{S} \subseteq \mathcal{T} \subseteq \Omega$,
we have:
	$$
	\Sigma_{\mathcal{S}} \,\subseteq\, \Sigma_{\mathcal{T}}\qquad\text{and}\qquad \Pi_{\mathcal{S}}\,\subseteq\,\Pi_{\mathcal{T}}
	$$
\end{lemma}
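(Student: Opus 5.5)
We prove both inclusions together by strong induction on $N(g)$, following the pattern of Theorem~\ref{thm:fund-cgt}. Fix $\mathcal{S}\subseteq\mathcal{T}\subseteq\Omega$. For every position $g$ we will establish the joint statement
\[
\bigl(g\in\Sigma_{\mathcal{S}}\implies g\in\Sigma_{\mathcal{T}}\bigr)\quad\text{and}\quad\bigl(g\in\Pi_{\mathcal{S}}\implies g\in\Pi_{\mathcal{T}}\bigr).
\]
The two halves must be proved simultaneously. By Definition~\ref{def:Sigma_T Pi_T}, membership in $\Sigma$ is defined through $\Pi$ at successor positions, and membership in $\Pi$ through $\Sigma$.

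\emph{Base case} $N(g)=0$. Then $g$ is finished. In that case both $g\in\Sigma_{\mathcal{S}}$ and $g\in\Pi_{\mathcal{S}}$ reduce to $\mathcal{O}(g)\in\mathcal{S}$. Since $\mathcal{S}\subseteq\mathcal{T}$, this gives $\mathcal{O}(g)\in\mathcal{T}$, so $g\in\Sigma_{\mathcal{T}}$ and $g\in\Pi_{\mathcal{T}}$.

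\emph{Inductive step} $N(g)>0$. Then $g$ is not finished, and every move $g\xrightarrow{\sigma}g'$ satisfies $N(g')<N(g)$, so the induction hypothesis applies to each such $g'$.
\begin{itemize}
\item Suppose $g\in\Sigma_{\mathcal{S}}$. Then some move $g\xrightarrow{\sigma}g'$ has $g'\in\Pi_{\mathcal{S}}$. The hypothesis gives $g'\in\Pi_{\mathcal{T}}$, hence $g\in\Sigma_{\mathcal{T}}$.
\item Suppose $g\in\Pi_{\mathcal{S}}$. Then every move $g\xrightarrow{\sigma}g'$ has $g'\in\Sigma_{\mathcal{S}}$. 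The hypothesis gives $g'\in\Sigma_{\mathcal{T}}$ for all such $g'$, hence $g\in\Pi_{\mathcal{T}}$.
\end{itemize}

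There is no real obstacle; the argument is routine. The only care needed is to carry both inclusions in one induction hypothesis rather than proving them separately, and to note that the finished case is decided by the outcome alone. That is why we split on whether $N(g)=0$, which by the remark after the definition of $N(g)$ is equivalent to $g$ being finished.
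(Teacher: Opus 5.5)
Your proof is correct and takes the same approach as the paper: a single induction on $N(g)$ carrying both implications together, with the finished base case settled by $\mathcal{O}(g)\in\mathcal{S}\subseteq\mathcal{T}$, and the inductive step passing the hypothesis through the existential clause defining $\Sigma$ and the universal clause defining $\Pi$ in Definition~\ref{def:Sigma_T Pi_T}. Your base case also states the condition correctly as $\mathcal{O}(g)\in\mathcal{S}$, where the paper's text has the slip ``$g\in\mathcal{S}$''.
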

\begin{proof}
Let~$\mathcal{S},\mathcal{T}\subseteq \Omega$ be given.
	We prove by induction on~$N(g)$
	that for any~$g\in 
	\mathcal{G}$,
\begin{equation}
\label{sigpiop}
g\in \Sigma_\mathcal{S} \implies g\in \Sigma_\mathcal{T}
	\qquad\text{and}\qquad
g\in \Pi_\mathcal{S} \implies g\in  \Pi_\mathcal{T}.
\end{equation}
Suppose that~$N(g)=0$.
Then~$g$ is finished.
So if $g\in \Sigma_{\mathcal{S}}$ we must have~$g\in \mathcal{S}\subseteq \mathcal{T}$,
and thus $g\in \Sigma_{\mathcal{T}}$.
	Similarly,  $g\in \Pi_{\mathcal{S}}$ implies
	$g\in \Pi_{\mathcal{T}}$.

Next suppose that~$g\in \mathcal{G}$
with $N(g)>0$
is given such that~\eqref{sigpiop} holds
for all~$g'$ with $N(g')<N(g)$.
Suppose further
	that~$g\in \Sigma_{\mathcal{S}}$.
	Then $\smash{g\xrightarrow{\sigma} g'}$
for some~$\sigma\in\Sigma$ and~$g'\in \Pi_{\mathcal{S}}$.
	Then~$g'\in \Pi_{\mathcal{T}}$ by the induction hypothesis,
and so $g\in\Sigma_{\mathcal{T}}$.
With a similar reasoning we get 
	that $g\in \Pi_{\mathcal{S}}$ implies $g\in \Pi_{\mathcal{T}}$.
Hence~\eqref{sigpiop} holds for~$g$.
\end{proof}
\begin{proof}[of \emph{\textbf{Corollary~\ref{cor: zermelo's result}}}]
Note that either $g \in \Sigma_{\Omega_X}$ (case 1) holds or not,
$g \not\in \Sigma_{\Omega_X}$,
which (by Theorem~\ref{thm:fund-cgt}) is equivalent to
$g \in \Pi_{\Omega \backslash \Omega_X} = \Pi_{\Omega_T \sqcup \Omega_O}$.
Similarly, either $g\in \Pi_{\Omega_O}$ (case 2) holds, or otherwise
$g \in \Sigma_{\Omega_T \sqcup \Omega_X}$ (but not both).
Note that if neither case~1 nor case~2 holds, we are in case~3.
A priori
there is the option that both  case~1 and case~2 hold,
but this does not actually happen.
If $g\in \Sigma_{\Omega_X}$ (case 1),
then also $g\in \Sigma_{\Omega_T \sqcup \Omega_X}$
(by~Lemma~\ref{lem:A subset B then sigma_A subset sigma_B and for Pi}),
which means that case~2 does not hold by the argumentation above.
\end{proof}
\begin{proof}[of \emph{\textbf{Proposition~\ref{prop:guaranteeable sets are equivalent for bisimilar games}}}]
We only have to prove the equivalence involving~$\Sigma_{\mathcal{T}}$;
	the equivalence involving~$\Pi_{\mathcal{T}}$
	follows from it by Theorem~\ref{thm:fund-cgt}.
We use induction over~$N(g_1)$. Let $g_1$ be given.

	Suppose that~$N(g_1)=0$,
	and let~$g_2$ with $g_1 \sim g_2$ be given.
Then~$g_1 \in \mathcal{F}$,
and thus~$g_2 \in \mathcal{F}$ and $\mathcal{O}(g_1)=\mathcal{O}(g_2)$
(since $g_1\sim g_2$).
Now,
we have $g_1 \in \Sigma_{\mathcal{T}}$ 
iff $\mathcal{O}(g_1)\equiv \mathcal{O}(g_2) \in \mathcal{T}$
iff $g_2 \in \Sigma_{\mathcal{T}}$.

Suppose now that~$N(g_1) > 0$,
and that~\eqref{bisimilarity-implies-equistrategy}
already holds
for all~$\tilde{g}_1$ with $N(\tilde{g}_1) < N(g_1)$
 (and any~$g_2$).
We'll show that~\eqref{bisimilarity-implies-equistrategy}
holds for~$g_1$ as well.

Let $g_2$ with $g_1\sim g_2$ be given.
Note that since~$N(g_1)>0$,
we know that~$g_1$ is not finished,
and since~$g_1 \sim g_2$,
we know~$g_2$ is not finished either.

Suppose that $g_1 \in \Sigma_{\mathcal{T}}$.
We must show that $g_2\in\Sigma_{\mathcal{T}}$ as well.
Since~$g_1$ is not finished and  $g_1\in \Sigma_{\mathcal{T}}$
	there is a move
	$g_1 \xrightarrow{\sigma_1} g_1'$
	such that~$g_1' \in \Pi_{\mathcal{T}}$.
Since~$g_1\sim g_2$
	there is a move $g_2 \xrightarrow{\sigma_2} g_2'$
	such that $g_1'\sim g_2'$.
Since~$N(g_1')<N(g_1)$,
	we know that equation~\eqref{bisimilarity-implies-equistrategy}
	already holds for $N(g_1')$.
	In particular, $g_2' \in \Pi_{\mathcal{T}}$.
	This means that from~$g_2$ we can move into~$\Pi_{\mathcal{T}}$,
	so $g_2\in\Sigma_{\mathcal{T}}$.

Suppose instead that~$g_2 \in \Sigma_{\mathcal{T}}$;
	we must show that $g_1 \in \Sigma_{\mathcal{T}}$.
	Since~$g_2$ is not finished, we know there is some
move $g_2 \xrightarrow{\sigma_2} g_2'\in \Pi_{\mathcal{T}}$.
Since~$g_1\sim g_2$,
and~$g_2$ is not finished,
	there is some move $g_1 \xrightarrow{\sigma_1} g_1'$
	with $g_1'\sim g_2'$.
Thus~$g_1'\in \Pi_{\mathcal{T}}$ 
since~\eqref{bisimilarity-implies-equistrategy} already holds for~$g_1'$.
Thus $g_1\in \Sigma_{\mathcal{T}}$.
\end{proof}

\section{Making a game injective and of fixed length}
\label{injective-and-fixed-length}
To make a game injective,
we employ the following construction.
\begin{definition}\label{def:transcribed games}
The \textbf{transcribed variant} $\Grec$ of a combinatorial game~$\G$ is defined as follows.
\begin{itemize}
\item
As \emph{positions}, $\Grec$ has
\textbf{transcripts},
pairs $(g,R)\in \G\times \Sigma^*$,
where $g\in\G$ and $R\equiv (R_1,\dotsc,R_n)$
is a sequence of moves in~$\G$ from~$0$ to~$g$,
so
		$0 \smash{\xrightarrow{R_1} g_1 \xrightarrow{R_2} \dots \xrightarrow{R_n}} g$.
\item
	$\Grec$ has the same set of \emph{moves},  $\Sigma$,
		as $\G$,
		and its \emph{transition map}
		is given by
	$$
		\Gammarec (\,(g, (R_1,\dotsc,R_n)), \,\sigma\,)\  =\  \begin{cases}
			(\,\Gamma(g, \sigma), \,(R_1,\dotsc,R_n,\sigma)\,) &\text{if } \Gamma(g, \sigma )\neq \perp,\\
		\perp &\text{otherwise.}
	\end{cases}
	$$
The \emph{starting position} of $\Grec$ is simply $(0,())$.

\item
$\Grec$ 
has the same \emph{outcome set}, $\Omega$, as~$\G$.
A position $(g,R)$ in $\Grec$ is finished when~$g$ is finished in~$\G$,
and in that case 
the outcome is given by~$\mathcal{O}((g,R))=\mathcal{O}(g)$.
\item
Finally, $\mathrm{turn}((g,R))=\mathrm{turn}(g)$.
\end{itemize}
\end{definition}
The proof that $\Grec$ obeys the six conditions
we put on a combinatorial game (see Definition~\ref{cg})
is easy,
and omitted.
What's more important is that~$\Grec$ is injective:
\begin{lemma}\label{lem: a transcribed game is injective}%
The transcribed variant~$\Grec$ of a combinatorial game~$\G$
is injective.
\end{lemma}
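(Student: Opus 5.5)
The plan is to unfold Definition~\ref{def:injective combinatorial games} directly. Suppose $\Gammarec((g,R),\sigma) = \Gammarec((h,S),\pi) \neq \perp$, where $R=(R_1,\dotsc,R_n)$ and $S=(S_1,\dotsc,S_m)$. By the definition of $\Gammarec$, both sides are valid, so we get the equality of pairs
$$(\Gamma(g,\sigma),\,(R_1,\dotsc,R_n,\sigma)) \;=\; (\Gamma(h,\pi),\,(S_1,\dotsc,S_m,\pi)).$$
Comparing second components as sequences in $\Sigma^*$, the lengths agree ($n+1=m+1$). Comparing the last entries gives $\sigma=\pi$, and comparing the first $n$ entries gives $R=S$. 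This proves injectivity in the move, and it also shows that the transcripts agree.

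It remains to show $g=h$, which is the only step where something beyond bookkeeping is needed. Here we use the defining property of a transcript: $R$ is a sequence of moves in $\G$ from $0$ to $g$, and $S$ is such a sequence from $0$ to $h$. Since $\Gamma$ is a (partial) function, the position reached by following a given move sequence from $0$ is uniquely determined. Concretely, we show by a short induction on $k$ that the intermediate positions $g_k$ determined by $0\xrightarrow{R_1}g_1\xrightarrow{R_2}\dotsb$ are fixed by $R_1,\dotsc,R_k$, namely $g_k=\Gamma(g_{k-1},R_k)$. Because $R=S$, it follows that $g=g_n=h$.

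Hence $(g,R)=(h,S)$ and $\sigma=\pi$, so $\Gammarec$ is injective on valid moves. I expect no real obstacle. The one point to state explicitly is that the first component of a transcript is a function of its second component (determinism of $\Gamma$), which is exactly what lets us recover the source position from the target.
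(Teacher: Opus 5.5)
Your proof is correct and follows the paper's argument essentially verbatim: you compare the two transcript pairs to get $\sigma=\pi$ and $R=S$, then conclude $g=h$ because both positions are reached from $0$ by the same move sequence. Your explicit induction on the determinism of $\Gamma$ spells out the step the paper states in one line.
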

\begin{proof}
	Let $(g, R), (h, S) \in \Grec$ and $\sigma, \pi \in \Sigma$ such that $\Gammarec((g, R), \sigma) = \Gammarec((h, S), \pi) \neq \perp$
	 be given.  We must show that $(g,R)=(h,S)$ and $\sigma=\pi$.

Write $R\equiv(R_1,\dotsc,R_n)$ and $S\equiv (S_1,\dotsc,S_m)$.
Then by definition of  $\Gammarec$ we get 
$$(\Gamma(g, \sigma), (R_1,\dotsc,R_n,\sigma))\ =\ (\Gamma(h, \pi), (S_1,\dotsc,S_m,\pi)),$$
	which implies that
	$\sigma=\pi$ and $R = S$.
Since therefore~$g$ and~$h$ are reached starting at~$0$ by the same sequence of moves~$R=S$,
	we have $g=h$.
	Hence~$\Grec$ is injective.
\end{proof}

A combinatorial game can be made  fixed length as follows.

\begin{definition}\label{def: extended game}
	The \textbf{extended variant} $\G_{ext}$
	of a comb.~game~$\G$ is
	defined as follows.
\begin{itemize}
\item
The game~$\G_{ext}$
basically has the same \emph{positions} as~$\G$, except that finished
positions are duplicated so that either~$X$ or~$O$ can be the player-to-move
on them:
 $$\G_{ext} \ = \ 
\{ \,(g,\turn(g))\,\colon\,g\in \G\backslash\mathcal{F}\,\}\ 
\cup\ \{\,(g,P)\,\colon\, g\in\mathcal{F},\, P\in \{X,O\}\,\}
\ \subseteq\ \G\times \{X,O\}.$$
		The \emph{starting position} is~$(0,X)$,
		and on a position $(g,P)$, the player-to-move is just $\turn((g,P))=P$.
\item
$\G_{ext}$ adds one additional \emph{move}, $\bullet$, (the `extension move') to~$\G$,
so $\Sigma_{ext}=\Sigma\cup\{\bullet\}$.
This extension move can be made on any finished position~$g\in\mathcal{F}$,
and will only switch the player, so
$$\Gamma((g,P),\bullet)\ =\  (g,\overline{P}),$$
where $\overline{X}=O$ and~$\overline{O}=X$.
For a move $\sigma \in \Sigma$ from the original game~$\G$,
the \emph{transition map}  is given by 
$\Gamma((g,P),\sigma)\ =\  (\,\Gamma(g,\sigma),\,\overline{P}\,).$
\item
$\G_{ext}$ has the same outcome set~$\Omega$ as~$\G$.
A position $(g,P)$ is \emph{finished}
when~$g$ is finished,
and in that case $\mathcal{O}((g,P)) = \mathcal{O}(g)$.
\end{itemize}
\end{definition}

That the extended variant of a combinatorial game itself
obeys the six conditions
of Definition~\ref{cg} is straightforward to check.
Recall that the ``games end'' condition~\ref{games-end} 
only requires that any sequence of~$N$ moves results in a finished position.
Nothing in our definition of combinatorial game prohibits moves on finished positions, 
like the extension move $\bullet$ (as long as the outcome stays the same.)

What's special is that~$\G_{ext}$ is also  fixed length:
\begin{lemma}\label{lem:an extended is fixed-length}
The extended variant $\G_{ext}$ of a combinatorial game~$\G$ is  fixed length.
\end{lemma}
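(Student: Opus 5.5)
We take as witness the same constant $N$ that the ``games end'' condition~\ref{games-end} provides for the original game~$\G$, and check both halves of Definition~\ref{def:fixed-length combinatorial games} for~$\G_{ext}$ with this~$N$.

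\emph{Second half (a move is always available).} This is the easy direction. Let $(g,P)$ be any position of $\G_{ext}$.
\begin{itemize}
\item If $g\in\mathcal{F}$, the extension move $\bullet$ is valid, with $(g,P)\xrightarrow{\bullet}(g,\overline{P})$.
\item If $g\notin\mathcal{F}$, then by condition~\ref{no-moves-implies-finished} for~$\G$ there is some $\sigma\in\Sigma$ with $\Gamma(g,\sigma)\neq\perp$, and hence a valid move on $(g,\turn(g))$.
\end{itemize}
So every position of $\G_{ext}$ has a valid move. In particular, any sequence of $M<N$ moves starting at $(0,X)$ can be continued.

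\emph{First half (every run of $N$ moves ends finished).} Take a sequence $(g_0,P_0)\xrightarrow{\tau_1}\dotsb\xrightarrow{\tau_N}(g_N,P_N)$ in $\G_{ext}$. We must show $g_N\in\mathcal{F}$, and we split into two cases.
\begin{itemize}
\item Suppose no $\tau_i$ equals $\bullet$. Then $g_0\xrightarrow{\tau_1}g_1\xrightarrow{\tau_2}\dotsb\xrightarrow{\tau_N}g_N$ is a sequence of $N$ moves in~$\G$. Hence $g_N\in\mathcal{F}$ by condition~\ref{games-end}.
\item Suppose some $\tau_i=\bullet$. Then $g_{i-1}\in\mathcal{F}$, since $\bullet$ is only valid on finished positions, and $\bullet$ leaves the $\G$-component unchanged, so $g_i=g_{i-1}$. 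Every later step is either another $\bullet$ (which keeps~$g$) or an original move from a finished position. By condition~\ref{outcomes-do-not-change} for~$\G$, such a move lands in a finished position again. Induction on the remaining steps gives $g_N\in\mathcal{F}$.
\end{itemize}

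The only step needing care is bookkeeping against the definition of $\G_{ext}$. We must check that the extended game genuinely satisfies the conditions of Definition~\ref{cg} on the nose, for instance that $(\Gamma(g,\sigma),\overline{P})$ is an element of $\G_{ext}$. When $\Gamma(g,\sigma)$ is not finished this requires $\overline{P}=\turn(\Gamma(g,\sigma))$, which holds by the taking-turns condition on~$\G$ whenever $P=\turn(g)$. The delicate case is a $\G$-move from a finished $(g,P)$ with $P\neq\turn(g)$. By condition~\ref{outcomes-do-not-change} the target $\Gamma(g,\sigma)$ is finished, so both player labels are present in $\G_{ext}$. I expect this well-definedness check to be the main obstacle; the fixed-length argument itself is routine once it is in place.
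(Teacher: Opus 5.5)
Your proof is correct and follows the paper's argument: you reuse the constant $N$ of $\G$ and show that every position of $\G_{ext}$ has a move, splitting on whether $g$ is finished (an original move via condition~\ref{no-moves-implies-finished}, or the extension move $\bullet$). You also spell out two things the paper leaves implicit: the step it dismisses as ``clearly'', which you handle with a case split on whether $\bullet$ occurs together with condition~\ref{outcomes-do-not-change}, and the check that the target $(\Gamma(g,\sigma),\overline{P})$ really is a position of $\G_{ext}$.
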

\begin{proof}
There is a number~$N$ for the game~$\G$ 
such that any sequence of~$N$ moves in~$\G$ from~$0$ results
in a finished position.
Clearly,
the same will be true for this~$N$ in~$\G_{ext}$.
But in~$\G_{ext}$
we also have that
in any position $(g,P)$
there is still a move that can be made.
	Indeed, either $(g,P)$ (and thus~$g$) is not finished,
	and therefore there must already have been a  move from~$g$ in~$\G$
	(which can be made on $(g,P)$ in~$\G_{ext}$ too),
	\emph{or} $(g,P)$ (and thus~$g$) is finished,
	and we can thus make the extension move $(g,P)\xrightarrow\bullet(g,\overline{P})$.
\end{proof}

To make combinatorial game
both injective and of fixed-length,
one first adds the extension move,
and then adds the transcripts, see the proposition below.
The other way around does not work:
the extended variant of a combinatorial game is generally not injective.
\begin{proposition}\label{prop: the transcribed extended game is injective and fixed-length}
The transcribed variant~$\Grec$ of a fixed-length combinatorial game~$\G$ is of fixed-length too.
In particular,
$(\G_{ext})_{tr}$
is both injective and  fixed length.
\end{proposition}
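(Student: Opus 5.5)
The argument rests on one observation: the transcript of a position in $\Grec$ records exactly how it was reached from the start, so walks in $\Grec$ and walks in $\G$ starting at $0$ correspond one-to-one. Concretely, let $N$ be the fixed-length constant of $\G$. I will show that this same $N$ works for $\Grec$.

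First I check the ``games end'' part. Take any sequence of $N$ moves $(g_0,R^0)\xrightarrow{\sigma_1}\dotsb\xrightarrow{\sigma_N}(g_N,R^N)$ in $\Grec$. By the definition of $\Gammarec$, the first components form a sequence $g_0\xrightarrow{\sigma_1}\dotsb\xrightarrow{\sigma_N}g_N$ of $N$ valid moves in $\G$. Hence $g_N\in\mathcal{F}$, and therefore $(g_N,R^N)$ is finished in $\Grec$. (This also settles condition~\ref{games-end} for $\Grec$, which the paper said was easy.)

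Next I check the ``a move is still available'' part. Take a sequence $(0,())=:(g_0,R^0)\xrightarrow{\sigma_1}\dotsb\xrightarrow{\sigma_M}(g_M,R^M)$ with $M<N$. Projecting to first components gives a sequence $0\xrightarrow{\sigma_1}\dotsb\xrightarrow{\sigma_M}g_M$ of $M<N$ moves in $\G$ that starts at $0$. Because $\G$ is fixed-length, there is some $\sigma_{M+1}$ with $\Gamma(g_M,\sigma_{M+1})\neq\perp$. Then $\Gammarec((g_M,R^M),\sigma_{M+1})=(\Gamma(g_M,\sigma_{M+1}),R^M\sigma_{M+1})\neq\perp$, so the required move exists in $\Grec$. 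The only point needing care is that the projected sequence really starts at $0$, which holds because the starting position of $\Grec$ is $(0,())$. This bookkeeping is the nearest thing to an obstacle.

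For the ``in particular'' clause, I combine the earlier results. Lemma~\ref{lem:an extended is fixed-length} shows that $\G_{ext}$ is fixed-length, so the first part of this proposition applied to $\G_{ext}$ shows that $(\G_{ext})_{tr}$ is fixed-length. Lemma~\ref{lem: a transcribed game is injective}, applied to the combinatorial game $\G_{ext}$, shows that $(\G_{ext})_{tr}$ is injective.
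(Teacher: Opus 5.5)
Your proof is correct and follows the paper's argument. You keep the same constant $N$, project sequences of moves in $\Grec$ down to $\G$ to get finishedness, and lift an available move from $g_M$ back to $(g_M,R^M)$ via the definition of $\Gammarec$; deriving the ``in particular'' clause from Lemma~\ref{lem:an extended is fixed-length} and Lemma~\ref{lem: a transcribed game is injective} spells out what the paper leaves implicit.
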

\begin{proof}
Since~$\G$ is  fixed length, there is some number~$N$
such that any sequence of~$N$ moves from~$0$
results in a finished position in~$\G$,
and after any~$M<N$ moves from~$0$
a move can still be made.
It's clear that a sequence of~$N$ moves in~$\Grec$ results in a finished position too.
Suppose $(g,R)$ is reachable from~$(0,())$
in~$M<N$ moves.
Then the exact same moves can be made in~$\G$ too
to reach $g$ from~$0$.
In~$g$, there is a still a move that can be made
and that move can be made on~$(g,R)$ too,
by definition of the transition map on~$\Grec$.
Hence~$\Grec$ is  fixed length.
\end{proof}

We now know how to make a
combinatorial game
both
injective and  fixed length,
which are required for our definition of
quantum variant (see Definition~\ref{qv}).
When we speak, for example,
of quantum tic-tac-toe,
we do not  actually mean the quantum variant of tic-tac-toe itself (which is not defined),
but of the transcribed variant of the extended variant of tic-tac-toe.
This extended and then  transcribed variant
is
in gameplay, however, indistinguishable
from the original variant,
which we formalize below using the  notion of bisimulation\cite{bisimulation-history}.

With this notion of bisimilarity between games we can show that transcribing 
or extending a game has no effect on gameplay.
\begin{lemma}\label{lem: a game is bisimilar to its transcribing}
A combinatorial game~$\G$ 
and its transcribed variant~$\Grec$
are bisimilar.
\end{lemma}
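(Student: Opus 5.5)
The plan is to exhibit an explicit relation and check the clauses of Definition~\ref{def:bisimulation}. I define $\sim\ \subseteq \G\times\Grec$ by
$$g \sim (h,R) \quad:\iff\quad g = h,$$
that is, a position of~$\G$ is related to every transcript that ends in it. Since every position of~$\Grec$ is a pair $(h,R)$ with $R$ a genuine play from~$0$ to~$h$, this relation is total on~$\Grec$. It is also total on~$\G$, by the reachability condition~\ref{reachable-positions} of Definition~\ref{cg}, although totality is not actually required. The starting positions are related because $0\sim(0,())$.

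For the first clause, suppose $g\sim(g,R)$. By Definition~\ref{def:transcribed games}, $(g,R)$ is finished exactly when $g$ is finished, and in that case $\mathcal{O}((g,R))=\mathcal{O}(g)$, so the clause holds by definition.

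For the back-and-forth clauses, let $g\sim(g,R)$ with $g\notin\mathcal{F}$.
\begin{itemize}
\item \emph{Forth.} Suppose $g\xrightarrow{\sigma}g'$ in~$\G$, so $\Gamma(g,\sigma)=g'\neq\perp$. Then $\Gammarec((g,R),\sigma)=(g',R\sigma)$, and this is a legitimate position of~$\Grec$ because $R\sigma$ is a play from~$0$ to~$g'$. Clearly $g'\sim(g',R\sigma)$.
\item \emph{Back.} Suppose $(g,R)\xrightarrow{\sigma}(g',R')$ in~$\Grec$. By the definition of~$\Gammarec$ this forces $\Gamma(g,\sigma)=g'\neq\perp$ and $R'=R\sigma$. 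Hence $g\xrightarrow{\sigma}g'$ in~$\G$, and $g'\sim(g',R')$.
\end{itemize}
In both directions the same move $\sigma$ is used, so no translation of moves is needed.

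I do not expect a real obstacle; everything is unfolding definitions. The only point needing care is that the target of the forth step really lies in~$\Grec$, i.e.\ that appending $\sigma$ keeps the transcript a valid play ending in~$g'$. The turn maps need not be checked, since Definition~\ref{def:bisimulation} does not mention them; in any case they agree because $\turn((g,R))=\turn(g)$.
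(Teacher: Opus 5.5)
Your proposal is correct and takes the same approach as the paper: the same relation $g \sim (h,R) \iff g = h$, the observation that $0 \sim (0,())$, and the same clause-by-clause check using the identical move $\sigma$ in both directions. Your remarks on totality and on $R\sigma$ being a valid play from $0$ to $g'$ are harmless extras that the paper leaves implicit.
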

\begin{proof}
Define $\sim \subseteq \G \times \Grec$ by $g \sim (h, R)$ iff $g=h$. Then first of all,
clearly
$0 \sim (0, ()) = 0_{tr}$. Now
	let $g\sim (g,(\sigma_1,\dotsc,\sigma_n))$.
Then~$\sim$ is a bisimulation,
because, using the same numbering of conditions as in  Definition~\ref{def:bisimulation}:
	\begin{enumerate}
		\item We have $g \in \mathcal{F}$ iff $(g, R) \in\mathcal{F}_{tr}$,
			and in that case  $\outcome(g) = \outcome((g,R))$, by definition.
		\item Suppose $g \not\in \mathcal{F}$.
		\begin{enumerate}
			\item If $g \xrightarrow{\sigma} g'$,
				then we have the corresponding move
				$(g, (\sigma_1,\dotsc,\sigma_n)) \xrightarrow{\sigma} (g', (\sigma_1,\dotsc,\sigma_n,\sigma))$
				with $g'\sim (g',(\sigma_1,\dotsc,\sigma_n,\sigma))$.
			\item 
		If $(g, (\sigma_1,\dotsc,\sigma_n)) \xrightarrow{\sigma} (g', (\sigma_1,\dotsc,\sigma_n,\sigma))$ then we have the corresponding move $g \xrightarrow{\sigma}  g' \sim (g', (\sigma_1,\dotsc,\sigma_n,\sigma))$.
		\end{enumerate}
	\end{enumerate} 
\end{proof}

\begin{lemma}\label{lem: a game is bisimilar to its extension}
A combinatorial game~$\G$ and its extended variant $\G_{ext}$ are bisimilar.
\end{lemma}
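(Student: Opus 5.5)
I will exhibit an explicit relation and check the conditions of Definition~\ref{def:bisimulation}, just as in the proof of Lemma~\ref{lem: a game is bisimilar to its transcribing}. Define $\sim\subseteq \G\times\G_{ext}$ by
$$g\sim (h,P)\quad\iff\quad g=h \text{ and } P=\turn(g).$$
So a position of~$\G$ is related only to its ``natural'' copy in~$\G_{ext}$, the one where the player-to-move agrees with~$\G$. For non-finished~$g$ this is the unique copy $(g,\turn(g))$. For finished~$g$ it picks one of the two duplicates. Since $\turn(0)=X$, we get $0\sim(0,X)$, which is the starting position of~$\G_{ext}$.

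For condition~1, suppose $g\sim(g,\turn(g))$. By definition $(g,P)$ is finished in~$\G_{ext}$ iff $g\in\mathcal{F}$, and then $\mathcal{O}((g,P))=\mathcal{O}(g)$, so the condition holds.

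For condition~2, take $g\sim(g,\turn(g))$ with $g\notin\mathcal{F}$. The key observation is that the extension move~$\bullet$ is only available on finished positions, so the moves out of $(g,\turn(g))$ in~$\G_{ext}$ are exactly the moves $\sigma\in\Sigma$ with $\Gamma(g,\sigma)\neq\perp$.
\begin{itemize}
\item[(a)] If $g\xrightarrow{\sigma}g'$ in~$\G$, then $(g,\turn(g))\xrightarrow{\sigma}(g',\overline{\turn(g)})$. By the taking-turns condition of~$\G$ we have $\overline{\turn(g)}=\turn(g')$. Hence the target is $(g',\turn(g'))$, which is a genuine position of~$\G_{ext}$ and satisfies $g'\sim(g',\turn(g'))$.
\item[(b)] If $(g,\turn(g))\xrightarrow{\tau}(h,Q)$ in~$\G_{ext}$, then $\tau\neq\bullet$ because~$g$ is not finished. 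So $\tau=\sigma\in\Sigma$, $h=\Gamma(g,\sigma)$, and $g\xrightarrow{\sigma}h$ in~$\G$. As in~(a), $Q=\turn(h)$, so $h\sim(h,Q)$.
\end{itemize}

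The only subtle point is checking that the target $(\Gamma(g,\sigma),\overline{P})$ of an original move carries the correct player tag. This relies on the taking-turns axiom of~$\G$. It also guarantees that the target lies in the specified position set of~$\G_{ext}$ when $\Gamma(g,\sigma)$ is not finished. Finished positions need no matching of moves, since Definition~\ref{def:bisimulation} only requires matching from non-finished positions. The duplicated copy $(g,\overline{\turn(g)})$ and the $\bullet$-moves therefore never have to be simulated. This completes the bisimulation.
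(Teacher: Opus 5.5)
Your proof is correct and follows essentially the same route as the paper: an explicit relation pairing a position of $\G$ with its copies in $\G_{ext}$, checked directly against the bisimulation conditions, using the fact that $\bullet$ is unavailable on non-finished positions. The only difference is that the paper relates $g$ to every copy $(g,P)$, which spares it the appeal to the taking-turns axiom. Your smaller relation coincides with the paper's on non-finished positions but needs that axiom to verify the player tag, and you do verify it correctly.
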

\begin{proof}
Define~$\sim\subseteq \G\times\G_{ext}$ by $g\sim(h,P)$
iff $g=h$.
Then first of all, clearly~$0\sim(0,X)\equiv 0_{ext}$.
	Now, let $g\sim(g,P)$ be given.
The relation~$\sim$
is a bisimulation,
because, using the same numbering of conditions as in Definition~\ref{def:bisimulation}:
\begin{enumerate}
\item We have $g \in \mathcal{F}$ iff $(g, P) \in \mathcal{F}_{ext}$,
	and in that case, $\mathcal{O}((g,P))\equiv \mathcal{O}(g)$, by definition.

\item Suppose $g \not\in \mathcal{F}$
\begin{enumerate}
\item When $g \xrightarrow{\sigma} g'$ for some 
$\sigma \in \Sigma$, 
we have a corresponding move $(g,P)\xrightarrow\sigma (g',\overline{P})$,
with $ g'\sim (g',\overline{P})$.
\item When $(g,P)\xrightarrow\sigma (g',P')$,
	we know~$\sigma\neq \bullet$ (since $g\notin \mathcal{F}$),
		so we have a corresponding move
		$g\xrightarrow\sigma g'$ in~$\G$,
		with $g'\sim(g',P')$.
		\end{enumerate}
	\end{enumerate}
\end{proof}

Whence $\G \sim \G_{ext}\sim (\G_{ext})_{tr}$.
To conclude that~$\G\sim (\G_{ext})_{tr}$,
we need the fact that bisimilarity is a transitive relation on games.
\begin{lemma}\label{lem: transitivity of bisimilarity relation}
	The bisimulation relation $\sim$ is transitive, i.e.\
	if $\G_1 \sim \G_2$ and $\G_2 \sim \G_3$ then $\G_1 \sim \G_3$, for combinatorial games $\G_1, \G_2$ and $\G_3$.
\end{lemma}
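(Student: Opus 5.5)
The natural approach is relational composition. Given bisimulations $\sim_{12}\subseteq \G_1\times\G_2$ and $\sim_{23}\subseteq\G_2\times\G_3$, I will define $\sim_{13}\subseteq \G_1\times\G_3$ by
$$g_1\sim_{13} g_3 \iff \text{there is } g_2\in\G_2 \text{ with } g_1\sim_{12} g_2 \text{ and } g_2\sim_{23} g_3,$$
and verify the conditions of Definition~\ref{def:bisimulation} one at a time. The starting positions are related because $0_{\G_1}\sim_{12}0_{\G_2}$ and $0_{\G_2}\sim_{23}0_{\G_3}$, so $0_{\G_2}$ is a witness.

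For condition~1, take $g_1\sim_{13}g_3$ with witness $g_2$. Applying condition~1 to both given bisimulations gives $g_1\in\mathcal{F}_1$ iff $g_2\in\mathcal{F}_2$ iff $g_3\in\mathcal{F}_3$. In the finished case, the outcomes satisfy $\mathcal{O}(g_1)=\mathcal{O}(g_2)=\mathcal{O}(g_3)$, which uses that all three games share the outcome set $\Omega$.

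For condition~2, assume $g_1\notin\mathcal{F}_1$, so by the previous step $g_2\notin\mathcal{F}_2$ as well. For (a), let $g_1\xrightarrow{\sigma_1}g_1'$. Condition~2(a) for $\sim_{12}$, which applies because $g_1$ is not finished, gives a move $g_2\xrightarrow{\sigma_2}g_2'$ with $g_1'\sim_{12}g_2'$. Condition~2(a) for $\sim_{23}$, which applies because $g_2$ is not finished, gives $g_3\xrightarrow{\sigma_3}g_3'$ with $g_2'\sim_{23}g_3'$. Hence $g_1'\sim_{13}g_3'$ with witness $g_2'$. Part (b) is symmetric: starting from a move of $g_3$, I use 2(b) of $\sim_{23}$ and then 2(b) of $\sim_{12}$, again relying on the non-finishedness of $g_2$ and $g_3$ established in condition~1.

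The only real subtlety is that condition~2 is stated only for non-finished positions. So before using the second bisimulation I must transfer non-finishedness to the intermediate witness $g_2$ via condition~1. Once that is done, the argument is routine, and I expect no genuine obstacle.
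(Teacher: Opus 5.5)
Your proposal is correct and follows essentially the same route as the paper: composing the two bisimulations through an intermediate witness $g_2$ and checking the conditions of Definition~\ref{def:bisimulation} directly. Your explicit transfer of non-finishedness to $g_2$ (and the note that all three games share $\Omega$) is a step the paper uses only implicitly, so your version is, if anything, slightly more careful.
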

\begin{proof}
Let the bisimulations between $\G_1$ and~$\G_2$ and between $\G_2$ and~$\G_3$
both be denoted by~$\sim$.
We define the bisimilarity relation between $\G_1$ and $\G_3$ by
 $g_1 \sim g_3$ iff $g_1 \sim g_2$ and $g_2 \sim g_3$.

	We must now show that this relation is indeed a bisimulation. First, $0_1 \sim 0_2$ and $0_2 \sim 0_3$ so $0_1 \sim 0_3$. Then, we go over the conditions as in Definition~\ref{def:bisimulation}.

	Let $g_1 \in \G_1$ and $g_3 \in \G_3$ such that $g_1 \sim g_3$, then by definition there exists a $g_2 \in \G_2$ such that $g_1 \sim g_2 \sim g_3$. 
	\begin{enumerate}
		\item $g_1 \in \mathcal{F}_1$ iff $g_2 \in \mathcal{F}_2$ iff $g_3 \in \mathcal{F}_3$, and then $\outcome(g_1) = \outcome(g_2) = \outcome(g_3)$.
		\item Assume $g_1 \not \in \mathcal{F}_1$, then:
		\begin{enumerate}
			\item If $g_1 \xrightarrow{\sigma_1} g_1'$, then $g_2 \xrightarrow{\sigma_2} g_2'$ for some $\sigma_2$ and $g_2'$, with $g_1' \sim g_2'$. Since $\G_2 \sim \G_3$ we also have that $g_3 \xrightarrow{\sigma_3} g_3'$ for some $\sigma_3$ and $g_3'$ with $g_2' \sim g_3'$.
				This gives that $g_1' \sim g_3'$.
			\item This case is analogous to (a).
		\end{enumerate}
	\end{enumerate}
	So, $\G_1 \sim \G_3$.
\end{proof}

\begin{corollary}
	\label{modified-game-bisimilar}
	Any combinatorial game $\G$ is bisimilar
	to its extended and then transcribed variant, $\G \sim (\G_{ext})_{tr}$.
\end{corollary}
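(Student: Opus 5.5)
Since Lemma~\ref{lem: a game is bisimilar to its extension} gives $\G\sim\G_{ext}$ and Lemma~\ref{lem: a game is bisimilar to its transcribing}, applied to the combinatorial game $\G_{ext}$ in place of~$\G$, gives $\G_{ext}\sim(\G_{ext})_{tr}$, the plan is simply to chain these two bisimulations with Lemma~\ref{lem: transitivity of bisimilarity relation}.

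First I will confirm that the hypotheses match. Lemma~\ref{lem: a game is bisimilar to its transcribing} holds for an arbitrary combinatorial game, so I need $\G_{ext}$ to be one; the text after Definition~\ref{def: extended game} states that it satisfies the six conditions of Definition~\ref{cg}. Both bisimulations also require equal outcome sets, and $\G$, $\G_{ext}$ and $(\G_{ext})_{tr}$ all have outcome set~$\Omega$ by construction.

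Second, I will check that the composite relation from Lemma~\ref{lem: transitivity of bisimilarity relation}, namely $g\sim((h,P),R)$ iff $g=h$, relates the starting positions: $0\sim(0,X)\sim((0,X),())$. The transitivity lemma then yields $\G\sim(\G_{ext})_{tr}$.

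There is no real obstacle here. The only point needing care is the implicit claim that $\G_{ext}$ is a genuine combinatorial game, which the paper asserts but leaves unchecked. If I wanted to be thorough I would verify the two conditions most at risk. For condition~\ref{outcomes-do-not-change}, the $\bullet$ move preserves both $g$ and its outcome, and original moves from finished $g$ already preserve the outcome in~$\G$. For condition~\ref{games-end}, I would take the bound $N+1$ or similar: after $N$ moves any underlying $g$ is finished, because each $\bullet$ move is made on a finished position and the remaining moves form a run in~$\G$.
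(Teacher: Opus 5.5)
Your proposal is correct and is the paper's proof: the paper also obtains $\G \sim (\G_{ext})_{tr}$ by chaining Lemma~\ref{lem: a game is bisimilar to its extension} with Lemma~\ref{lem: a game is bisimilar to its transcribing} (applied to $\G_{ext}$) via the transitivity Lemma~\ref{lem: transitivity of bisimilarity relation}. Your side checks are sound: that $\G_{ext}$ is a genuine combinatorial game with outcome set $\Omega$, and that the composite relation relates the starting positions. Your argument for condition~\ref{games-end} in fact shows that the original bound $N$ already suffices, since a $\bullet$ move can only occur on an already finished position.
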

\begin{proof}
	This follows directly from Lemmas \ref{lem: a game is bisimilar to its transcribing}, \ref{lem: a game is bisimilar to its extension}, and \ref{lem: transitivity of bisimilarity relation}.
\end{proof}

\section{Omitted theory and proofs (Quantum combinatorial games)}
\label{omitted-proofs-quantum}
\begin{lemma}
\label{supp-lemma}
Given a move~$U$ on a position~$\ket{\varphi}$ in the quantum variant
of a quantizable combinatorial game~$\mathcal{G}$,
	we have
$\Supp{}(U\ket\varphi) \ \subseteq \ \bigcup_{g \in \Supp{}\ket \varphi}\Supp{}(U\ket g)$.
\end{lemma}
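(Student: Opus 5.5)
The plan is to prove this by linearity of $U$ and a contrapositive check on coefficients. Write $\ket\varphi = \sum_{g\in\Supp{}\ket\varphi}\alpha_g\ket g$ with every $\alpha_g\neq 0$. Since $U$ is linear,
\[
U\ket\varphi \;=\; \sum_{g\in \Supp{}\ket\varphi} \alpha_g\, U\ket g .
\]
Note that nothing about validity of $U$, the game, or the turn structure is needed. The statement is purely a fact about linear maps on the computational basis $\{\ket{h}\colon h\in\{0,1\}^L\}$.

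Next, let $h$ be any classical position (basis element) with $h\notin\bigcup_{g\in\Supp{}\ket\varphi}\Supp{}(U\ket g)$. We show that $h\notin \Supp{}(U\ket\varphi)$. Take the coefficient of $\ket h$ on both sides of the display, i.e.\ apply $\bra h$:
\[
\bra h U\ket\varphi \;=\; \sum_{g\in\Supp{}\ket\varphi}\alpha_g\,\bra h U\ket g .
\]
Each term vanishes, because $h\notin\Supp{}(U\ket g)$ means precisely that $\bra h U\ket g=0$. So the coefficient of $h$ in $U\ket\varphi$ is $0$, and $h\notin\Supp{}(U\ket\varphi)$. This contrapositive gives the inclusion.

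The only point needing care is bookkeeping. The notion $\Supp{}$ was defined for superpositions of positions in $\G^C$, while $U\ket g$ might a priori have components outside $\G^C$. I would therefore read $\Supp{}$ as the set of basis strings with nonzero amplitude, which agrees with the paper's definition on states supported in $\G^C$. With that reading the argument goes through verbatim, and there is no real obstacle. Inclusion rather than equality is all we can expect, since cancellations among the terms $\alpha_g U\ket g$ may remove elements.
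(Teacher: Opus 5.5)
Your proof is correct and is essentially the paper's argument. The paper takes $s$ with $\langle s\vert U\vert\varphi\rangle \neq 0$, expands this as $\sum_{g} \langle g\vert\varphi\rangle\,\langle s\vert U\vert g\rangle$, and concludes that some term is nonzero, which is the direct form of your contrapositive coefficient check. Your reading of the support as the set of basis strings with nonzero amplitude is also how the paper implicitly uses it.
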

\begin{proof}
Let~$s\in \Supp{}(U\ket\varphi)$ be given.
This means that $\left<s \right| U \ket\varphi \neq 0$.
	Since $\ket\varphi = \sum_{g\in\Supp{}\ket\varphi} \left<g\middle|\varphi\right>\ket{g}$,
	we get $0\neq \left<s \right | U \ket\varphi 
	\ = \ \sum_{g\in \Supp{}\ket\varphi} 
	\left<g \middle| \varphi \right>
	\left<s\right| U \ket{g}$,
	and so there must be at least one~$g$ for which 
	$\left<s\right| U \ket{g}\neq 0$,
	that is, $s\in \Supp{}(U\ket{g})$.
\end{proof}
\begin{proof}[of \emph{\textbf{Lemma~\ref{lem: all classical moves have an equivalent quantum move}}}]
Given~$g\in S$, write $\hat{g} := \Gamma(g,\omega(g))$
and define $\smash{\hat{S}}:=\{\Gamma(g,\omega(g))\colon g\in S\}$.
Then~$\hat{\cdot}$ gives a bijection~$S\longrightarrow \smash{\hat{S}}$,
using here that the game~$\G$ is injective.
Recall that the positions of~$\G$ are encoded using length-$L$ bitstrings,
that is, $\G\subseteq \{0,1\}^L$.
Since there is a bijection from~$S$ to~$\smash{\hat{S}}$,
the sets~$S$ and~$\smash{\hat{S}}$
have the same size.
As a consequence, $\{0,1\}^L\backslash S$
and $\{0,1\}^L\backslash \smash{\hat{S}}$
have the same number of elements too.
This means that $\hat{\cdot}\colon S\to\smash{\hat{S}}$
can be extended to a permutation  of~$\{0,1\}^L$.
Pick such a permutation $\hat{\cdot}\colon \{0,1\}^L\to\{0,1\}^L$ --- it does not matter which one.
Let~$U_\omega$ denote the unitary associated with this permutation,
	given by $U_\omega\ket{w} = \ket{\hat{w}}$.
Then, by definition, equation~\eqref{eq:cqm} holds.

Let non-finished~$\ket \varphi\in\G^Q$ with $\Supp{}\ket\varphi\subseteq S$ be given.
We claim that~$U_\omega$ is a valid move on~$\ket\varphi$.
First, $U_\omega\ket\varphi$ is a superposition of positions in~$\G^C$,
since $U_\omega$ maps each~$g\in\Supp{}\ket\varphi\subseteq S\subseteq \G$ to~$\hat{g}\in\G$.
Second, let~$g'\in \Supp{}(U_\omega\ket\varphi)$ be given;
	we must find~$g\in \G$ and $\sigma\in\Sigma$ with $g\xrightarrow\sigma g'$.
By Lemma~\ref{supp-lemma},
	there is~$g\in\Supp{}\ket\varphi$ with~$g'\in \Supp{}(U_\omega\ket{g})$.
Since~$g\in\Supp{}\ket\varphi\subseteq S$,
we have $U_\omega\ket{g}=\ket{\hat{g}}$, so $g'=\hat{g}$.
	Since $\smash{g\xrightarrow{\omega(g)}\hat{g}}\equiv g'$,
$U_\omega$ is a valid move on~$\ket\varphi$.
That~$U_\omega$
is a classical move on~$\ket\varphi$
	follows directly from~\eqref{eq:cqm}.
\end{proof}
\begin{proof}[of \emph{\textbf{Proposition~\ref{classical-bisimilar}}}]
We claim that the relation  $\sim \subseteq \G \times \G^{C\text{--}C}$ given by
	$g \sim \ket{\varphi}$ iff $\ket{\varphi}=\ket{g}$ is a bisimulation.
Indeed,
first of all, $0 \sim 0$, because $0\equiv \ket{0}$ is by definition
the starting position of $\G^{C\text{--}C}$.
	Second, we check the two numbered conditions of~Definition~\ref{def:bisimulation}:
	\begin{enumerate}
		\item 
Let $g_1 \sim \ket{\varphi_2}$ be given.
			Then $\ket{\varphi_2} = \ket{g_1}$
by definition of~$\sim$.
Now, $g_1\in\mathcal{F}$
iff $\ket{g_1}\in\mathcal{F}^Q$,
by definition of the quantum variant.
If~$g_1\in \mathcal{F}$,
then
			$\mathcal{O}(\ket{g_1})=\mathcal{O}(g_1)$,
			again by definition of the quantum variant.

		\item \begin{enumerate}
		\item If $g \xrightarrow{\sigma} g'$, then by Lemma~\ref{lem: all classical moves have an equivalent quantum move} we get that $\ket g \xrightarrow{U_\sigma} U_\sigma \ket g = \ket {g'}$, and $g'\sim \ket{g'}$.
		\item If $\ket g \xrightarrow{U} \ket {\varphi'}$, then since~$U$ must be classical, $U\ket{g}=\ket{\varphi'}\equiv \ket {g'}$ for some~$g'\in \G$.
			By definition of a valid quantum move,
				there is a $\sigma\in \Sigma^C$ with $g \xrightarrow{\sigma}g'(\sim \ket{\varphi'})$.
		\end{enumerate}
	\end{enumerate}
	Hence we have a bisimulation, $\G \sim \G^{C\text{--}C}$.
\end{proof}

\section{Quantum Byzantine consensus}\label{ap:QBC}

We model the network as a graph $G=(V,E)$, where nodes represent computing units and edges represent communication channels. We assume synchronous rounds and perfect communication channels.

Each node has a state,
an element from the set~$\mathcal{S}$,
that includes not only its internal state,
but also the registers for incoming and outgoing messages,
and its decision value from a set $\mathcal{D}$.
Every round, every node applies a local state-transition $\mathcal{S}\rightarrow \mathcal{S}$, 
after which the system routes outgoing messages according to the network topology, $G$. 
We denote the action of routing messages by 
$\mathrm{sys}\colon \mathcal{S}^n \rightarrow \mathcal{S}^n$, where $n = |V|$.

The Byzantine failure problem asks whether, in a network with up to $f$ adversarial nodes, the remaining well-behaving nodes can reach consensus. Formally we require:

\begin{enumerate}
	\item \textbf{Agreement: } all well-behaving nodes must end 
		with the same decision value; and
	\item \textbf{Validity: } if all honest nodes start with the same decision value, they must decide that value.
\end{enumerate}

A commonly considered third requirement is ``Termination''. By the finiteness requirement of quantizable games we only consider the Byzantine problem for a fixed finite number of rounds $N$. This implicitly gives us the termination requirement.

We now show how this can be cast as a combinatorial game. Let player $X$ represent the well-behaving nodes and player $O$ represent the Byzantine nodes. The game $\mathcal{G}_{BF}(f, G, N)$ is defined like so.

\begin{definition}
	Let $\mathcal{G}_{BF}(f, G, N)$ be the classical combinatorial game representing the Byzantine problem on network topology $G$ with at most $f$ failure nodes played for $N$ turns.
	
	A position is defined as a triple, $(\vec s, \mathcal{A}, B)$, the vector $\vec s \in \mathcal{S}^n$ of states of all nodes, the algorithm $\mathcal{A}$ the well-behaving nodes will follow, and the set $B \subseteq V$ of Byzantine nodes. The game starts in state $(\vec 0, NULL, \emptyset)$, where the algorithm and Byzantine have not yet been chosen. The initial state vector $\vec 0$ starts each node with empty message registers, no decision value, and empty internal state, except for a unique identifier for this node. The set of positions is formed by all positions which are reachable by the set of moves.

	The set of moves is defined as $\Sigma := \Sigma_{init} \cup \Sigma_{move}$. Where $\Sigma_{move} := \{\sigma: \mathcal{S} \rightarrow \mathcal{S}\}$ (the set of local state transitions) and $\Sigma_{init} := \{(\vec d, B)\ |\ \vec d \in \mathcal{D}^n, B \subseteq V, |B| \leq f\}$ (the set of initial problem configurations).

	The game proceeds in two phases:

	\begin{enumerate}
		\item \emph{Initialization:} Player $X$ chooses an algorithm $\mathcal{A} \in \Sigma_{move}$, yielding position $(\vec 0, \mathcal{A}, \emptyset)$.  Then player $O$ chooses an initial set-up $(\vec d, B) \in \Sigma_{init}$, yielding position $(\vec{s}_0, \mathcal{A}, B)$, where $\vec s_0$ is the result of setting the decision values in $\vec 0$ according to $\vec d$.
		\item \emph{Simulation:} At each subsequent round, player $X$'s move is fixed: first the application of the $\mathrm{sys}$ function, then the pre-chosen move $\mathcal{A}$. Where $\mathcal{A}$ is only applied to the well-behaving nodes, $V\setminus B$. Player $O$ chooses any $\sigma \in \Sigma_{move}$, which is only applied to the Byzantine nodes, $B$. 
	\end{enumerate}

	After $N$ turns, player $X$ wins iff consensus is reached i.e. agreement and validity; otherwise player $O$ wins.
\end{definition}

The important choice here is that player $X$ commits to a strategy $\mathcal{A}$ before it is known what nodes will fail or what the initial values are, effectively hiding this information from player $X$. 

\begin{corollary}
	Let $G$ be a network topology and $f$ the maximum number of Byzantine nodes. If there exists a classical algorithm guaranteeing consensus against classical Byzantine nodes within $N$ rounds, then there exists a classical algorithm guaranteeing  consensus against quantum Byzantine nodes within $N$ rounds.
\end{corollary}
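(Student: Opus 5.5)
The plan is to apply Theorem~\ref{no-quantum-advantage} to the classical-against-quantum variant of the game. First, I make $\G_{BF}(f,G,N)$ quantizable by replacing it with $(\G_{BF}(f,G,N)_{ext})_{tr}$. By Corollary~\ref{modified-game-bisimilar} and Proposition~\ref{prop:guaranteeable sets are equivalent for bisimilar games}, this changes neither $\Sigma_{\mathcal{T}}$ nor $\Pi_{\mathcal{T}}$ at the starting position. For finiteness I restrict to finite $\mathcal{S}$ and $\mathcal{D}$, so that $\Sigma_{move}$ and the set of positions are finite and can be encoded in $L$ bits. The outcome set is $\{X,O\}$, and I take the target $\mathcal{T}=\{X\}$.

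Next I translate the hypothesis. Suppose a classical algorithm $\mathcal{A}$ guarantees consensus against every classical Byzantine behaviour within $N$ rounds. Then after $X$ commits to $\mathcal{A}$, every continuation chosen by $O$ ends in an $X$-win. That is, the position $a:=(\vec 0,\mathcal{A},\emptyset)$ lies in $\Pi^C_{\{X\}}$, and hence $0\in\Sigma^C_{\{X\}}$. This is a routine induction over the game tree: $X$'s moves during simulation are forced, and all of $O$'s moves lead to winning positions. One point needs checking here. In the transcribed game, $X$'s ``move'' in the simulation phase must be a single element of $\Sigma$. So I identify it with the fixed composite ($\mathrm{sys}$ followed by $\mathcal{A}$ on $V\setminus B$), and I verify that this is the only valid move for $X$ at those positions.

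Now apply \eqref{nqa-2} of Theorem~\ref{no-quantum-advantage} in the variant $\G^{C\text{--}Q}$ to the position $\ket{a}$. It is reachable from $\ket 0$ by a classical move, by Lemma~\ref{lem: all classical moves have an equivalent quantum move}. Since $\Supp{}\ket a=\{a\}\subseteq\Pi^C_{\{X\}}$, we get $\ket a\in\Pi^{C\text{--}Q}_{\mathcal{D}(\{X\})}$. In other words, whatever quantum moves $O$ makes against the classical strategy (the Byzantine nodes acting quantumly), and with $X$ continuing to play classically, the final distribution is concentrated on $X$. So consensus is reached with probability one, and it is reached by the same algorithm $\mathcal{A}$. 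This gives the stronger form mentioned in Section~\ref{sec:applications}.

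The main obstacle is not the game theory but the modelling. I must justify that the quantum variant $\G^{C\text{--}Q}$, in which $O$'s unitaries are constrained only by the validity condition of Definition~\ref{qv}, faithfully captures ``quantum Byzantine nodes''. I also need to check that $X$'s strategy in the proof of Theorem~\ref{no-quantum-advantage} stays realizable as ``run $\mathcal{A}$''. For the latter, I will note that in that proof $X$'s move on a superposition is the unitary $U_\omega$ of Lemma~\ref{lem: all classical moves have an equivalent quantum move}, applied componentwise. Because $X$'s move is forced in each classical branch, $U_\omega$ is exactly the coherent application of $\mathrm{sys}$ followed by $\mathcal{A}$. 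This is a classical algorithm run on each branch, so it is legitimately ``classical well-behaving nodes''.
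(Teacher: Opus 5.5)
Your proposal is correct and is essentially the paper's argument. The paper invokes Corollary~\ref{cor: zermelo for quantum} in $\G^{C\text{--}Q}$, which amounts to \eqref{nqa-1} of Theorem~\ref{no-quantum-advantage} at $\ket 0$; you apply \eqref{nqa-2} one move later at the committed position $\ket a$, which shows directly that the \emph{same} algorithm $\mathcal{A}$ works, a strengthening the paper only gets in its later remark by hard-coding $\mathcal{A}$ into the game.

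One technicality you inherit from the paper: as literally defined, $(\G_{ext})_{tr}$ has infinitely many positions, since transcripts may contain arbitrarily many $\bullet$-moves. Your claim of an $L$-bit encoding therefore requires cutting transcripts off at the game's length bound, which changes nothing else in the argument.
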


\begin{proof}
A direct application of Corollary~\ref{cor: zermelo for quantum}.
\end{proof}

\begin{remark}
	One can adapt the definition to include player $X$'s choice for the algorithm used in the game definition. This would give that the algorithm that guarantees classical consensus is \emph{precisely} the classical algorithm that guarantees consensus against the quantum failures.
\end{remark}

\begin{remark}
	The only meaningful quantum variant is $\G^{C-Q}$ for this application, for two reasons. First, if player $X$ plays as a quantum player it is restricted in its choice of algorithm. It may choose any unitary in the initialization phase, however this will still be encoded as a superposition of classical moves. Player $O$ can then fix different initial values for each classical component, effectively forcing player $X$ to play classically. Second, by the definition of valid quantum moves, player $X$ still retains some freedom in its choice of move \emph{after} the initial values have been specified. This is against the spirit of the Byzantine Generals problem.

	These reasons might seem contradictory; arguing too much and too little freedom. The resolution is that the set of valid quantum moves is constrained by the allowed classical moves, but the set of valid classical moves is restricted too much. Simultaneously, within this restricted set of valid quantum moves, there is still freedom of choice, which violates the spirit of the problem. 
\end{remark}

\end{document}